\documentclass[11pt,a4paper]{article}

\usepackage{fancyhdr}
\usepackage{amsmath,mathtools}
\usepackage{bm}
\usepackage{esint}
\usepackage{amsfonts}
\usepackage{amsthm}
\usepackage{amssymb}
\usepackage{amsbsy}
\usepackage{verbatim}
\usepackage{graphicx}
\usepackage{url}
\usepackage{mathrsfs}
\usepackage[hmargin = 1in,vmargin=.75in]{geometry}
\usepackage{color}
\usepackage{amstext}
\usepackage{stmaryrd}
\usepackage{enumerate}
\usepackage{algpseudocode}
\usepackage{algorithm}
\usepackage{pifont}
\usepackage{prettyref}
\usepackage{subcaption}
\usepackage{hyperref}
%\usepackage{showkeys}
%\usepackage{natbib}
%\usepackage{amsfonts}
%\usepackage{latexsym}
%\usepackage{amsmath}
%\usepackage{amsthm,amssymb,mathrsfs,amstext}

%\usepackage{theorem}
%\usepackage{amssymb}
%\usepackage{lastpage}
%\usepackage{fancyhdr}
%\usepackage{fancybox}
%\usepackage{graphicx}
%\input epsf.tex
%\usepackage{epsf}
%\usepackage{epsfig}
%\usepackage[all]{xy}
%\pagestyle{plain}
%\pagenumbering{arabic}
%\textwidth15cm
%\oddsidemargin0.5cm
\evensidemargin0.5cm
\font\msbm=msbm10

\numberwithin{equation}{section}

\theoremstyle{plain}
\newtheorem{theorem}{Theorem}[section]
\newtheorem{lemma}[theorem]{Lemma}
\newtheorem{corollary}[theorem]{Corollary}

\newtheorem{proposition}[theorem]{Proposition}

\def\mathbb#1{\hbox{\msbm{#1}}}

%% bold symbol for SMALL LETTER
\newcommand{\ba}{\boldsymbol{a}}

\newcommand{\be}{\boldsymbol{e}}

\newcommand{\bx}{\boldsymbol{x}}

\newcommand{\bz}{\boldsymbol{z}}
\newcommand{\bone}{\boldsymbol{1}}

\newcommand{\bvarphi}{\boldsymbol{\varphi}}

%% bold symbol for BIG LETTER
\newcommand{\BA}{\boldsymbol{A}}
\newcommand{\BB}{\boldsymbol{B}}
\newcommand{\BC}{\boldsymbol{C}}
\newcommand{\BD}{\boldsymbol{D}}

\newcommand{\BJ}{\boldsymbol{J}}
\newcommand{\BL}{\boldsymbol{L}}

\newcommand{\BO}{\boldsymbol{O}}

\newcommand{\BQ}{\boldsymbol{Q}}
\newcommand{\BR}{\boldsymbol{R}}
\newcommand{\BS}{\boldsymbol{S}}

\newcommand{\BU}{\boldsymbol{U}}
\newcommand{\BV}{\boldsymbol{V}}
\newcommand{\BW}{\boldsymbol{W}}
\newcommand{\BX}{\boldsymbol{X}}
\newcommand{\BY}{\boldsymbol{Y}}
\newcommand{\BZ}{\boldsymbol{Z}}

\newcommand{\BDelta}{\boldsymbol{\Delta}}

%\newcommand{\blambda}{\boldsymbol{\lambda}}

%%%%  \hat

% bar and tilde

% 

\newcommand{\BLambda}{\boldsymbol{\Lambda}}
\newcommand{\BSigma}{\boldsymbol{\Sigma}}

\newcommand{\PP}{\mathcal{P}}

\newcommand{\I}{\boldsymbol{I}}
\newcommand{\RR}{\mathbb{R}}
\newcommand{\lag}{\langle}
\newcommand{\rag}{\rangle}

 % left parentheses
 % right parentheses
 % left square bracket
 % right square bracket
 % left curly bracket
 % right curly bracket

\newcommand{\eps}{\epsilon}

\DeclareMathOperator{\Tr}{Tr}

\DeclareMathOperator{\E}{\mathbb{E}}

\DeclareMathOperator{\diag}{diag}

\DeclareMathOperator{\SNR}{SNR}

\DeclareMathOperator{\St}{St}

\DeclareMathOperator{\Od}{O}

\DeclareMathOperator{\blkdiag}{blkdiag}
\DeclareMathOperator{\argmin}{argmin}

\newcommand\keywords[1]{\textbf{Keywords}: #1}
\newcommand\MSC[1]{\textbf{MSC numbers}: #1}

\begin{document}
\title{\bf Generalized Orthogonal Procrustes Problem \\ under Arbitrary Adversaries}
\author{Shuyang Ling\thanks{Shanghai Frontiers Science Center of Artificial Intelligence and Deep Learning, New York University Shanghai. Address: 567 Yangsi Road, Pudong New District, Shanghai, China, 200124. S.L. is (partially) financially supported by the National Key R\&D Program of China, Project Number 2021YFA1002800, Shanghai STCSM Rising Star Program No. 24QA2706200, STCSM General Program No. 24ZR1455300, National Natural Science Foundation of China No.12001372, Shanghai SMEC No. 0920000112, and NYU Shanghai Boost Fund.}}

\maketitle

\begin{abstract}
The generalized orthogonal Procrustes problem (GOPP) plays a fundamental role in several scientific disciplines including statistics, imaging science and computer vision. Despite its tremendous practical importance, it is generally an NP-hard problem to find the least squares estimator. We study the semidefinite relaxation (SDR) and an iterative method named generalized power method (GPM)  to find the least squares estimator, and investigate the performance under a signal-plus-noise model. We show that the SDR recovers the least squares estimator exactly and moreover the generalized power method with a proper initialization converges linearly to the global minimizer to the SDR, provided that the signal-to-noise ratio is large. The main technique follows from showing the nonlinear mapping involved in the GPM is essentially a local contraction mapping and then applying the well-known Banach fixed-point theorem finishes the proof.
In addition, we analyze the low-rank factorization algorithm and show the corresponding optimization landscape is free of spurious local minimizers under nearly identical conditions that enables the success of SDR approach. The highlight of our work is that the theoretical guarantees are purely algebraic and do not assume any statistical priors of the additive adversaries, and thus it applies to various interesting settings. 
\end{abstract}

\keywords{Generalized orthogonal Procrustes problem, semidefinite relaxation, generalized power method, Banach fixed-point theorem}
\vskip0.4cm
\MSC{90C22, 65D18, 49M20, 65K05}

\section{Introduction}
Given a set of multiple point clouds, how to find a rotation for each point cloud such that all the transformed point clouds are well aligned? This problem, known as rigid point clouds registration and also generalized orthogonal Procrustes problem (GOPP), has found numerous applications in computer vision~\cite{OVBS17} (multi-view point clouds registration), statistics~\cite{G75,GD04,S66}, shape analysis~\cite{SG02},  cryo-electron microscopy~\cite{CKS15,PSB21,SS11,S18}, and robotics~\cite{RCBL19}. In particular, it is also an important subproblem of the Procrustes matching problem~\cite{BM92,DL17,KK16,MDKK16}.

In this work, we consider each point cloud $\BA_i$ is a noisy copy of a common point cloud $\BA\in\RR^{d\times m}$ transformed by an unknown $d\times d$ orthogonal matrix $\BO_i$:
\begin{equation}\label{def:model0}
\BA_i = \BO_i \BA + \BDelta_i, ~1\leq i\leq n,
\end{equation}
where $\{\BDelta_i\}_{i=1}^n$ is any arbitrary additive noise. Our goal is to recover $(\BA,\BO_i)$ from $\{\BA_i\}_{i=1}^n$. In matrix form, we define
\begin{equation}\label{def:model}
\BD := \begin{bmatrix}
\BA_1 \\
\vdots \\
\BA_n
\end{bmatrix} = \BO\BA + \BDelta \quad\text{where}\quad \BO := 
\begin{bmatrix}
\BO_1 \\
\vdots \\
\BO_n
\end{bmatrix}\in\Od(d)^{\otimes n}, \quad \BDelta:=
\begin{bmatrix}
\BDelta_1 \\
\vdots \\
\BDelta_n
\end{bmatrix} \in\RR^{nd\times m},
\end{equation}
where $\Od(d)$ denotes the set of all $d\times d$ orthogonal matrices.

\subsection{Least squares estimation}
One approach is to find the least squares estimator by minimizing $f(\BS,\BA):$
\begin{align*}
f(\BS, \BA) & := \sum_{i=1}^n \| \BS_i\BA  - \BA_i\|_F^2 = {n\|\BA\|^2_F} - 2\sum_{i=1}^n\lag \BA, \BS_i^{\top}\BA_i\rag + \sum_{i=1}^n \|\BA_i\|_F^2
\end{align*}
where $\BS^{\top} = [\BS_1^{\top},\cdots,\BS^{\top}_n]$ is an element in $\Od(d)^{\otimes n}$.
For any fixed $\BS\in\Od(d)^{\otimes n}$, the global minimizer for $\BA$ is 
$\widehat{\BA} = n^{-1}\sum_{i=1}^n \BS_i^{\top}\BA_i$. 
Therefore, the least squares estimation is equivalent to find the global minimizer to the following program:
\begin{equation}\label{def:od}
\min_{\BS_i\in\Od(d)}~- \sum_{i,j} \left\lag\BA_i\BA_j^{\top}, \BS_i\BS_j^{\top}\right\rag = \min_{\BS\in\Od(d)^{\otimes n}}~-\left\lag \BC,\BS\BS^{\top}\right\rag  \tag{P}
\end{equation}
where $\BC_{ij} = \BA_i\BA_j^{\top}$ and $\BC = \BD\BD^{\top}$  is the data matrix:
\begin{equation}\label{def:C}
\BC = (\BO\BA + \BDelta)(\BO\BA + \BDelta)^{\top} = \BO\BA\BA^{\top}\BO^{\top} + \BDelta\BA^{\top}\BO^{\top} + \BO\BA\BDelta^{\top} + \BDelta\BDelta^{\top}.
\end{equation}

For $n=2$,~\eqref{def:od} can be easily solved by performing singular value decomposition~\cite{S66}.
However, it becomes much more challenging for large $n$ to find the global minimizer to~\eqref{def:od}~\cite{G75,N07,S11b}. In fact,~\eqref{def:od} is a well-known NP-hard problem even if $d=1$. Therefore, one would seek for alternative methods to approximate the global solution to this nonconvex program.

\subsection{Convex relaxation and its tightness} Among many existing approaches, semidefinite relaxation (SDR) is one of the most powerful methods. 
One common SDR uses the fact that {$\BZ = \BS\BS^{\top}$} is positive semidefinite and also its diagonal block is $\I_d$. Using these two observations gives a convex program:{
\begin{equation}\label{def:sdr}
\min_{\BZ\in\RR^{nd\times nd}}~-\lag \BC, \BZ\rag \quad \text{s.t.}\quad \BZ\succeq 0,~ \BZ_{ii} = \I_d. \tag{SDR}
\end{equation}}
Despite that this SDR is solvable with a polynomial-time algorithm, the main question is whether this SDR produces a solution close to the least squares estimator. 
In~\cite{BKS14},  Bandeira, Khoo, and Singer conjectured that a threshold exists such that if the signal-to-noise ratio (SNR) is above this threshold, the SDR succeeds in recovering the least squares estimator. Roughly speaking, if the noise strength $\|\BDelta\|$ is weaker than the signal $\|\BA\|$, the SDR can find the global minimizer to a highly nonconvex problem. In this work, we will try to investigate the following question 
\begin{equation*}
\text{\em When does the SDR recover the least squares estimator?}
\end{equation*}
This is known as the~\emph{tightness} of the SDR: the global minimizer to~\eqref{def:sdr} is rank-$d$.

Convex relaxation has proven to be one powerful approach to solve the generalized orthogonal Procrustes problem~\cite{B15,BKS16,CKS15,NRV13,N07,SS11,S11b}. The~\eqref{def:sdr} is a convex relaxation of~\eqref{def:od} and it generalizes the famous Goemans-Williamson relaxation~\cite{GW95} of the graph Max-Cut. After solving the SDR, one needs to round the solution to orthogonal matrices. This relaxation plus rounding strategy has enjoyed remarkable successes and can produce satisfactory approximate solutions~\cite{BKS16,NRV13}. {On the other hand, as observed in~\cite{BKS14}, the SDR is tight in the numerical trials, i.e., it
directly produces the globally optimal solution to~\eqref{def:od} and the rounding procedure is not necessary.} Similar phenomena have been observed in a series of applications arising from signal processing and machine learning such as matrix completion~\cite{CR09}, blind deconvolution~\cite{ARR13}, phase retrieval~\cite{CESV15}, multi-reference alignment~\cite{BCSZ14}, and community detection~\cite{ABBS14} in which the optimal solution to the SDR is exactly the ground truth solution. However, unlike these aforementioned examples, the least squares estimator usually does not match the ground truth signal in the generalized orthogonal Procrustes problem {in presence of noise.} The similar issue also arises in angular synchronization~\cite{BBS17,FKM21,ZB18}, orthogonal group synchronization~\cite{L20a,L20c,SS11,WS13}, and orthogonal trace-sum maximization~\cite{WZZ22}. This has created many difficulties in analyzing the tightness of convex relaxation which requires significantly different techniques from the previous examples~\cite{ABBS14,ARR13,BCSZ14,CESV15,CR09}. { It is also worth noting that spectral, least squares and SDP estimator achieve min-max optimal for the group synchronization under Gaussian noise even though they are not equal to the ground truth~\cite{GZ22,GZ23,Z22}. }

\subsection{Generalized power method (GPM)}
As the SDR is expensive to solve, we also consider an iterative method named generalized power method to efficiently find the least squares estimator. The algorithm follows a two-step procedure:
(a) start with an initialization $\BO^0$ via spectral method    {(spectral initialization), which is shown in the first two steps in Algorithm 1}; (b) update the iterate $\BO^t$ by
\begin{equation}\label{def:gpm}
\BO^{t+1} = \PP_n(\BC\BO^t)
\end{equation}
where $\BC\BO^t\in\RR^{nd\times d}$ and $\PP_n(\cdot):\RR^{nd\times d}\rightarrow \Od(d)^{\otimes n}$ maps each $d\times d$ block to a $d\times d$ orthogonal matrix by
\begin{equation}\label{def:proj}
\PP(\BR) := \argmin_{\BQ\in\Od(d)}\|\BQ - \BR\|_F^2
\end{equation}
whose global minimizer is $\BQ = \BU\BV^{\top}$ where $\BU$ and $\BV$ are the left and right singular vectors of $\BR\in\RR^{d\times d}$ respectively. 

\begin{algorithm}[h!]
\caption{{Spectral initialization and }generalized power methods for the GOPP}
\begin{algorithmic}[1]\label{algo:gopp}
\State Compute the top $d$ left singular vectors $\BU\in\RR^{nd\times d}$ of $\BD$ in~\eqref{def:model} with $\BU^{\top}\BU= \I_d.$
\State Compute $\PP(\BU_i)$ for all $1\leq i\leq n$ where $\BU_i$ is the $i$th $d\times d$ block of $\BU$.
\State Initialize $\BO^{0} = \PP_n(\BU)\in\RR^{nd\times d}.$  

\State $\BO^{t+1} = \PP_n( \BC\BO^{t})$, \quad $t=0,1,\cdots$

\State Stop when the iteration stabilizes.

\end{algorithmic}
\label{algo1}
\end{algorithm}\label{alg}

Generalized power method is much more efficient than the convex relaxation: {Every iteration of the GPM is essentially a matrix-vector multiplication followed by a projection step for each block. As shown in~\cite{L20a,LYS17,LYS20,ZB18}, the GPM is significantly faster than the SDR in synchronization problem.} However, this two-step approach is essentially nonconvex and has the risk of getting stuck at stationary points or local minimizer. 
On the other hand, the output of the GPM is usually quite satisfactory and is even equal to the solution to~\eqref{def:sdr}. Therefore, we would like to theoretically justify why this efficient nonconvex approach is successful, especially in the regime of high signal-to-noise ratio (SNR). 
In particular, we are interested in answering this question:
\begin{equation*}
\text{\em When does the generalized power method to recover the least squares estimator exactly?}
\end{equation*}

%For all three questions, we identify a sufficient condition, which only depends on the spectra of $\BA$ and $\BDelta_i$, to ensure the tightness of SDR and also the global convergence of generalized power method. 

In practice, the first-order optimization methods~\cite{AMS09,B15,EAS98,MGPG04,WY13} are quite popular in solving medium/large-scale convex programs. The main idea is first to find a proper initialization and show that global convergence of the algorithm is guaranteed with this carefully-chosen initialization~\cite{CLS15,CC18,KMO10,MWCC20}.
For the generalized orthogonal Procrustes problem, we will focus on the generalized power method (GPM) since it has been successfully applied to the group synchronization problem. The GPM was first proposed and analyzed in joint alignment problem~\cite{CC18}, angular synchronization~\cite{B16,LYS17,ZB18}, and was later extended to orthogonal group synchronization~\cite{LYS20,L20c} and community detection under the stochastic block model~\cite{WLZS21}.  The work~\cite{ZB18} by Zhong and Boumal analyzed the global convergence of the GPM to the least squares estimator in angular synchronization. Later on,~\cite{LYS20} studied the convergence of the GPM in the orthogonal group (and its subgroup) synchronization and provided a unified analysis on the error bound for the iterates. The analysis of the GPM algorithm combined with spectral initialization in~\cite{ZB18} has also been extended to the orthogonal group synchronization and the generalized orthogonal Procrustes problem with additive Gaussian noise in~\cite{L20b,L20c} and~\cite{L23d} respectively. However, it remains unclear whether the GPM works for the orthogonal Procrustes problem under arbitrary adversaries, which is the focus of this work.

\subsection{Optimization landscape of low-rank factorization}

The generalized power method works well empirically to solve the generalized Procrustes problem if the noise is small compared with the signal. More surprisingly, even without spectral initialization, the power method still recovers a tight solution. To partially explain this phenomenon, we investigate the performance of the Burer-Monteiro approach. 
The Burer-Monteiro approach~\cite{BM03,BM05,B15} replaces the orthogonality constraints of each $\BO_i$ in~\eqref{def:od} by an element on the Stiefel manifold $\St(d,p) : = \{\BS_i\in\RR^{d\times p}: \BS_i\BS_i^{\top} = \I_d\}$~\cite{AMS09,EAS98}, 
\begin{equation}\label{def:bm}
\min_{\BS_i\in \St(d,p)}~-\sum_{i,j} \lag \BC_{ij}, \BS_i\BS_j^{\top}\rag \tag{BM}
\end{equation}
This family of optimization programs~\eqref{def:bm} with different $p$ is an ``interpolation" between~\eqref{def:od} and~\eqref{def:sdr}: 
if $p = d$,~\eqref{def:bm} equals~\eqref{def:od} and if $p = nd$,~\eqref{def:bm} is exactly the~\eqref{def:sdr}. 
Despite the nonconvexity of~\eqref{def:bm}, it has been observed that even if the rank is quite small, the algorithm still does not often get stuck at spurious local minima even with random initialization. Therefore, we are interested in the following question:
\begin{equation*}
\text{\em  Are there no other local minima in~\eqref{def:bm} besides the global minima? }
\end{equation*}

It has been shown that no spurious local minima exist for a class of the SDPs, i.e., the optimization landscape is {\em benign}, if the rank of the factorization $p$ is sufficiently large, namely greater than the square root of the number of the constraints~\cite{BVB20,WW20}. 
Many works have shown that the landscape of the nonconvex objective function is benign under nearly identical conditions that enable the success of the SDR. Examples include phase retrieval~\cite{SQW18}, phase synchronization and community detection~\cite{L23c}, orthogonal group synchronization~\cite{L20a}, matrix completion~\cite{GLM16} and dictionary learning~\cite{SQW16}.  Therefore, we aim to provide a bound on the rank of the factorization that theoretically guarantees a benign optimization landscape for the GOPP under arbitrary noise.

%\paragraph{Our contribution in this work.} We give a deterministic sufficient condition which guarantees the tightness of SDR for the GOPP.  Also we show that the the generalized power method with spectral initialization converges to the global minimizer to the SDR and the least squares estimator. Finally, we show that the optimization landscape associated with the Burer-Monteiro factorization is benign if the rank of the factorization is slightly larger than $d$. This provides an explanation of why random initializations suffices to guarantee the global convergence. Our sufficient conditions for the tightness of the SDR, the global convergence of the GPM, and the benign optimization landscape of the Burer-Monteiro factorization all depend on the signal-to-noise ratio. Moreover, the sufficient condition is completely deterministic and works for any type of noise including random noise. 

\section{Main results}\label{s:main}
We denote vectors and matrices by boldface letters $\bx$ and $\BX$ respectively. For a given matrix $\BX$, $\BX^{\top}$ is the transpose of $\BX$ and $\BX\succeq 0$ means $\BX$ is symmetric and positive semidefinite. Let $\I_n$ be the identity matrix of size $n\times n$, and $\BJ_n$ and $\bone_n$ be the $n\times n$ matrix and $n\times 1$ vector with all entries equal to 1. For two matrices  $\BX$ and $\BY$ of the same size, their inner product is $\lag \BX,\BY\rag= \Tr(\BX^{\top}\BY) = \sum_{i,j}X_{ij}Y_{ij}.$ Let $\|\BX\|$ be the operator norm and $\|\BX\|_F$ be the Frobenius norm. We denote the $i$th largest and the smallest singular value of $\BX$ by $\sigma_{i}(\BX)$ and $\sigma_{\min}(\BX)$ respectively.
For a non-negative function $f(x)$, we write $f(x)\lesssim g(x)$ if there exists a positive constant $C_0$ such that $f(x)\leq C_0g(x)$ for all $x.$

\subsection{Main theorems}\label{ss:main}
{Before we present the main theorems, we provide some additional assumption on the data matrix $\BA$. Given a matrix $\BA$, we let  $\sigma_{\max}(\BA) := \sqrt{\lambda_{\max}(\BA\BA^{\top})}$ and $\sigma_{\min}(\BA) := \sqrt{\lambda_{\min}(\BA\BA^{\top})}$, and $\lambda_{\max}(\cdot)$ and $\lambda_{\min}(\cdot)$ are the largest and smallest eigenvalues respectively. In particular, we assume $\BA$ is full row-rank, i.e., $\sigma_{\min}(\BA) > 0$. Let $\kappa$ be the condition number of $\BA$, i.e., $\kappa = \sigma_{\max}(\BA)/\sigma_{\min}(\BA)$.} 
Define the SNR (signal-to-noise ratio) as 
\begin{equation}\label{def:snr}
\SNR := \frac{\sigma_{\min}(\BA)}{\max_{1\leq i\leq n} \|\BDelta_i\|}.
\end{equation}
Now we present our first main result on the tightness of the SDR.

\begin{theorem}[{\bf Tightness of the SDR}]\label{thm:main}
Suppose 
\[
\SNR \geq 30\kappa^3\sqrt{d},
\]  
the~\eqref{def:sdr} is tight and recovers the unique global minimizer  to~\eqref{def:od}, i.e., the global minimizer to the SDR is unique and exactly rank-$d$.
\end{theorem}

Theorem~\ref{thm:main} does not assume any statistical models on $\BDelta$. The sufficient condition for the tightness is completely~\emph{algebraic} and works for any type of additive adversaries. In other words, solving~\eqref{def:sdr} yields the least squares estimator to the generalized Procrustes problem in the high SNR regime. 

{Before we proceed to present the convergence of the GPM, we introduce one more notation.
Note that the value of~\eqref{def:od} remains the same if we multiply $\BX\in\Od(d)^{\otimes n}$ by another orthogonal matrix from the right side. We resolve this ambiguity by using the following distance for two elements $\BX$ and $\BY\in\RR^{nd\times d}$ by
\begin{equation}\label{def:df}
d_F(\BX,\BY) := \min_{\BQ\in\Od(d)} \|\BX - \BY\BQ\|_F = \min_{\BQ\in\Od(d)} \sqrt{ \sum_{i=1}^n\|\BX_i - \BY_i\BQ\|_F^2}
\end{equation}
where the minimizer $\BQ =\PP(\BY^{\top}\BX)$ is given in~\eqref{def:proj}. One can verify { $d_F(\BX,\BY)$  in~\eqref{def:df}} is indeed a distance and satisfies triangle inequality.}

%It is important to note that the globally optimal solution is~\emph{not} necessarily equal to the ground truth orthogonal matrices $\{\BO_i\}_{i=1}^n$. %The assumption is stronger than the ideal bound $\|\BDelta\|\leq \sqrt{n}\|\BA\|$ since we require the noise of {\em every} point cloud is weaker than the signal besides the presence of the condition number $\kappa$, which may be improved by using a more refined analysis.

\begin{theorem}[{\bf Convergence of the GPM}]\label{thm:gpm}
Suppose 
\[
\SNR \geq 35\kappa^4\sqrt{d},
\] 
the generalized power method with spectral initialization recovers the globally optimal solution to the SDR with linear convergence
\[
d_F(\BO^t, \widehat{\BO}) \leq \rho^t d_F(\BO^0, \widehat{\BO}),~~\rho < 1,
\]
where $\widehat{\BO}$ and $\widehat{\BO}\widehat{\BO}^{\top}$ are the unique globally optimal solution to~\eqref{def:od} and~\eqref{def:sdr} respectively. Moreover, {it holds that}
\[
\min_{\BQ\in\Od(d)} \|\widehat{\BO} - \BO\BQ\|_F \leq 7\kappa^2 \SNR^{-1}\sqrt{nd}.
\]
\end{theorem}
Theorem~\ref{thm:gpm} gives an efficient algorithm to produce the globally optimal solution to~\eqref{def:od} and~\eqref{def:sdr}. Moreover, the sufficient conditions in Theorem~\ref{thm:main} and~\ref{thm:gpm} are almost the same. {Both theorems are general and can be applied to different settings, e.g., $\BDelta_i$ is a random subgaussian matrix in which case $\|\BDelta_i\|$ is straightforward to upper bound, see~\cite{V18}. We will provide two concrete examples in the next sections.}
Finally, we will discuss the optimization landscape of~\eqref{def:bm}, whose proof is characterized via Theorem 2.8 in~\cite{L23c}.

\begin{theorem}[{\bf Optimization landscape of the Burer-Monteiro factorization}]\label{thm:bm}
Suppose that
\[
\SNR\geq  \frac{7\kappa^2 \sqrt{d}(p+2\kappa^2 d + d-2)}{p-2\kappa^2 d+d-2}~~ \text{ and }~~\SNR\geq 35\kappa^4\sqrt{d}, 
\]
then any second order critical point $\BS\in\St(d,p)^{\otimes n}$ is global and $\BS\BS^{\top}$ is equal to the unique global minimizer to~\eqref{def:sdr}. In particular, $p\geq 2\kappa^2 d + 2$ along with $\SNR\geq 35\kappa^4\sqrt{d}$ suffices to ensure a benign optimization landscape.
\end{theorem}
Theorem~\ref{thm:bm} justifies why gradient-based approach do not suffer from getting stuck at any local minimizers in~\eqref{def:bm} as every local minimizer in~\eqref{def:bm} is a global minimizer to~\eqref{def:sdr} provided that $p$ is slightly larger than $d$ in the high SNR regime.
%This bound is completely deterministic and can be used to certify whether the landscape is benign for a given noise level, at the cost of only increasing the rank of the Burer-Monteiro factorization. %The bound is significantly better than the general bound $p \approx \sqrt{2nd} $ for the Burer-Monteiro factorization to be free of spurious local optima~\cite{BVB20,WW20}. 

\subsection{Applications in two statistical models}
\paragraph{Example: Gaussian noise model.} Consider
\begin{equation}\label{def:modelgauss}
\BA_i = \BO_i \BA+ \sigma \BW_i, \quad 1\leq i\leq n,
\end{equation}
where $\BW_i\in\RR^{d\times m}$ is a Gaussian random matrix.
In this case, it suffices to have
\[
\SNR : = \frac{\sigma_{\min}(\BA)}{\max \|\BDelta_i\|} \geq \frac{\sigma_{\min}(\BA)}{\sigma (\sqrt{d} + \sqrt{m} + \sqrt{2\gamma\log n})} \geq 35\kappa^4\sqrt{d}
\]
where $\max\|\BW_i\| \leq \sqrt{d} + \sqrt{m} + \sqrt{2\gamma\log n}$ holds with probability at least $1-2n^{-\gamma+1}$, following from~\cite[Theorem 2.26]{W19} and~\cite[Theorem 7.3.1]{V18}.

\begin{corollary}\label{cor:gauss}
Consider the model~\eqref{def:modelgauss} and assume
\begin{equation}\label{eq:bks}
\sigma \leq  \frac{1}{35\kappa^4\sqrt{d}}\cdot\frac{\sigma_{\min}(\BA)}{ \sqrt{d} + \sqrt{m} + \sqrt{2\gamma\log n} }
\end{equation}
the GPM with spectral initialization converges linearly to the global minimizer to~\eqref{def:sdr} with probability at least $1-2n^{-\gamma+1}.$
\end{corollary}

Under the same setting,~\cite[Theorem 3.2]{L23d} proves that 
\begin{equation}\label{eq:l23d}
\sigma\lesssim \frac{1}{\kappa^4\sqrt{d}}\cdot \frac{\sigma_{\min}(\BA)}{\sqrt{d} + \sqrt{m/n} + \sqrt{2\gamma \log n}}
\end{equation}
is needed for the global convergence of the GPM and the tightness of SDR.
Our bound~\eqref{eq:bks} is sub-optimal particularly if $m$ is very large, in which increasing $n$ does not help improve the bound in~\eqref{eq:bks} unlike~\eqref{eq:l23d}.
%On the other hand, if $m$ is moderately large, Corollary~\ref{cor:gauss} is quite tight. However, if each point cloud contains many points, then~\cite{L23d} is much sharper.

\paragraph{Example: uniform corruption model} Consider  $\BA = [\ba_1,\cdots,\ba_m]$
consists of $m$ points on $\mathbb{S}^{d-1}.$ Each point cloud $\BA_i$ is constructed as follows:
\begin{equation}\label{def:modelbinary}
\ba_{ij} = \varphi_{ij} \ba_j + (1-\varphi_{ij})\bz_{ij},~~~~\varphi_{ij}\overset{i.i.d.}{\sim}\text{Ber}(\theta),~~\bz_{ij}\sim\text{Unif}(\mathbb{S}^{d-1})
\end{equation}
where $\ba_{ij}$ denotes the $j$-th point in the $i$-th point cloud.
The matrix form of the $i$-th point cloud equals
\[
\BA_i = \theta\BA + \BA(\diag(\bvarphi_i) - \theta\I_m) + \BZ_i (\I_m - \diag(\bvarphi_i))
\]
where $\bvarphi_i = [\varphi_{i1},\cdots,\varphi_{im}]^{\top}.$
The noisy part of $\BA_i$ can be written into:{
\begin{align*}
\BDelta_i & = \BA(\diag(\bvarphi_i) - \theta\I_m) + \BZ_i (\I_m - \diag(\bvarphi_i)) =  \sum_{j=1}^m\BDelta_{ij}, \\
\BDelta_{ij} & = \left((\varphi_{ij} -\theta) \ba_j + (1-\varphi_{ij})\bz_{ij}\right)\be_j^{\top}.
\end{align*}}
To apply Theorem~\ref{thm:main} and~\ref{thm:gpm}, it suffices to estimate $\|\BDelta_i\|$ and the singular values of $\BA$, which is quite standard via the matrix Bernstein inequality~\cite[Theorem 1.4]{T12}. We leave the proof to Section~\ref{ss:corbinary}.

\begin{corollary}\label{cor:binary}
Consider the model~\eqref{def:modelbinary} and assume
\begin{equation}
1\geq \theta \geq \max\left\{  \frac{400\kappa^4 d\log (n(m+d))}{\sqrt{m}}, 1 - \frac{1}{(140\kappa^4\sqrt{12d\log (n(m+d))})^2} \right\}
\end{equation}
and{
\[
m \geq 400^2\kappa^8 d^2 \log^2(n(m+d)),
\]
}the GPM with spectral initialization converges linearly to the global minimizer to~\eqref{def:sdr} with probability at least $1-2(n(m+d))^{-1}.$
\end{corollary} 
We proceed to perform some experiments to see if Corollary~\ref{cor:binary} is tight empirically, and then we will discuss more on the optimality and sub-optimality of our result. 

For the experiments, we will run the generalized power method with~\emph{random} initialization $\BO^0$. Once the iterates $\{\BO^t\}$ stabilize, we verify its global optimality via convex analysis. More precisely, 
the algorithm stops if
\[
\| \BO^{t+1} (\BO^{t+1})^{\top} -  \BO^t (\BO^t)^{\top}\|_F \leq 10^{-6}
\]
or reaches the maximum number of iterations. We consider $\BO^t$ as an approximate global minimizer if
\[
\| (\BLambda^t - \BC)\BO^t \| < 10^{-6}, \qquad \lambda_{d+1}(\BLambda^t - \BC) > 0
\]
where $\BLambda^t$ is a symmetric block-diagonal matrix whose $i$th block equals $\BLambda_{ii}^t =( [\BC\BO^{t}]_i[\BC\BO^{t}]_i^{\top})^{1/2}$ and $\lambda_{d+1}(\BLambda^t-\BC)$ denotes the $(d+1)$th smallest eigenvalue of $\BLambda^t-\BC$.

For the additive Gaussian model, we do not repeat the experiments here: the simulations in~\cite{L23d} indicate that if $\sigma_{\min}(\BO\BA )> 1.89 \|\BW\|$ where $\|\BW\| \leq \sqrt{nd} + \sqrt{m} + 2\sqrt{n\log n}$, i.e., then the GPM converges to the global minimizer of the SDR with high probability. 
For the uniform corruption model, we set different $(d,n,m)$ in the experiment, and $\theta$ varies from 0 to 1. For each set of $(d,n,m,\theta)$, we run 20 experiments and calculate the proportion of the global convergence of the~\eqref{def:gpm}, as shown in Figure~\ref{fig:1}. 
We can see that the phase transition boundary between the black region (failure) and the white one (success) is quite sharp. To see how the signal-to-noise affects the convergence, we plot $\|\BO\BA\|/\|\BDelta\|$ v.s. the frequency of success in the right column. The empirical experiments show that $\|\BO\BA\|/\|\BDelta\| > 2$, the tightness holds. 
\begin{figure}[h!]
\centering
\begin{minipage}{0.48\textwidth}
\centering
\includegraphics[width=65mm]{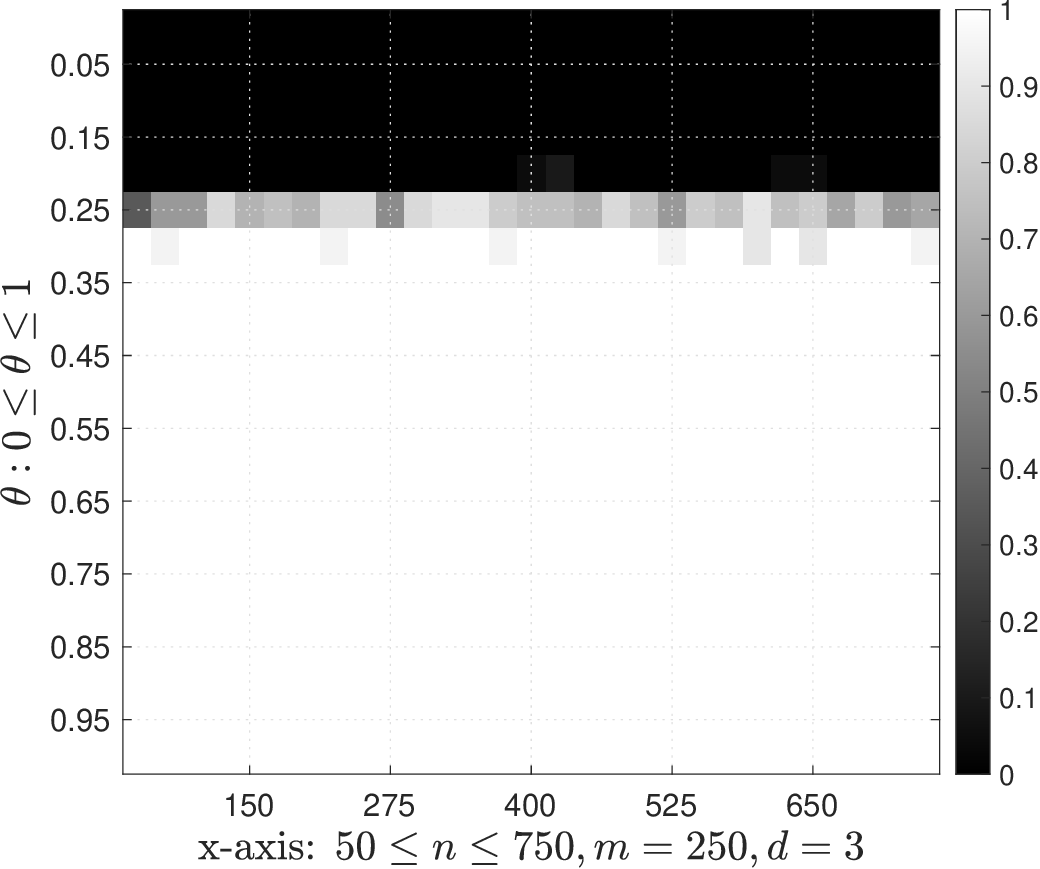}
\end{minipage}
\hfill
\begin{minipage}{0.48\textwidth}
\includegraphics[width=60mm]{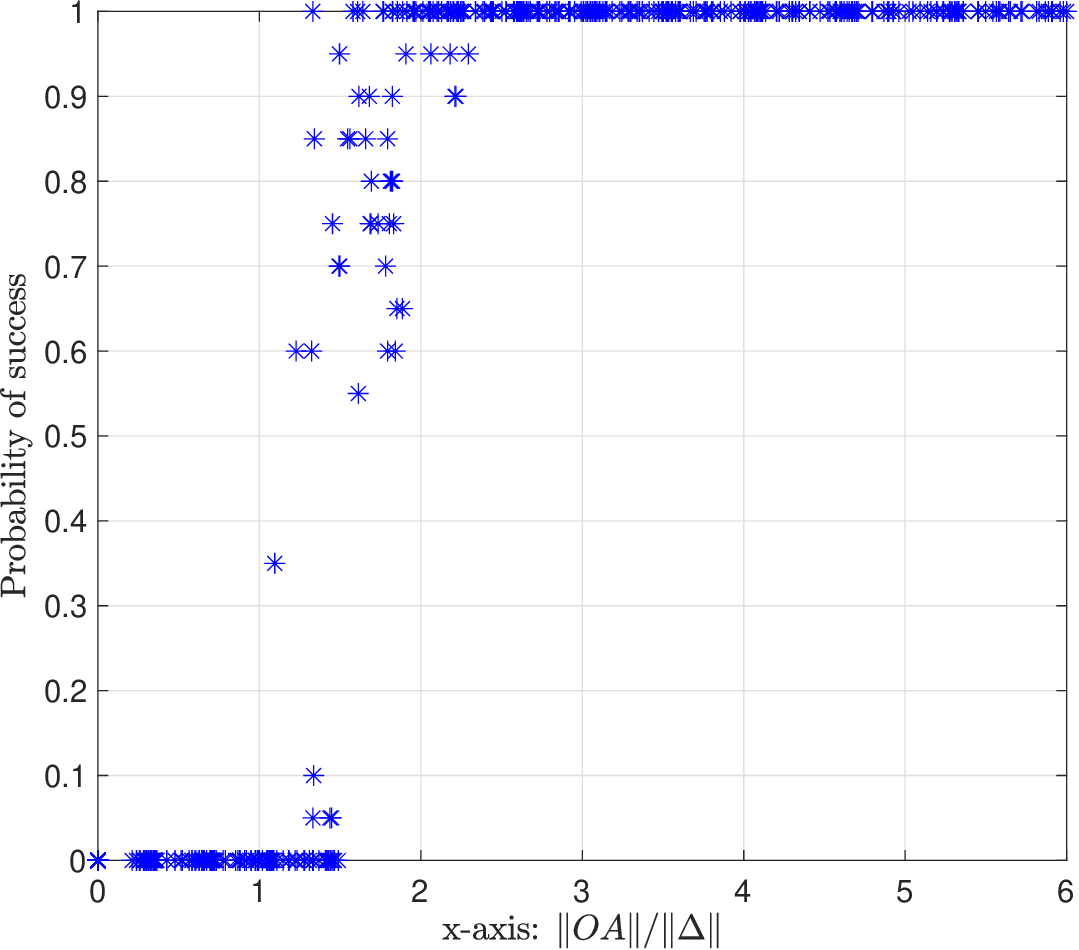}
\end{minipage}
\vfill
\vskip0.2cm
\begin{minipage}{0.48\textwidth}
\centering
\includegraphics[width=65mm]{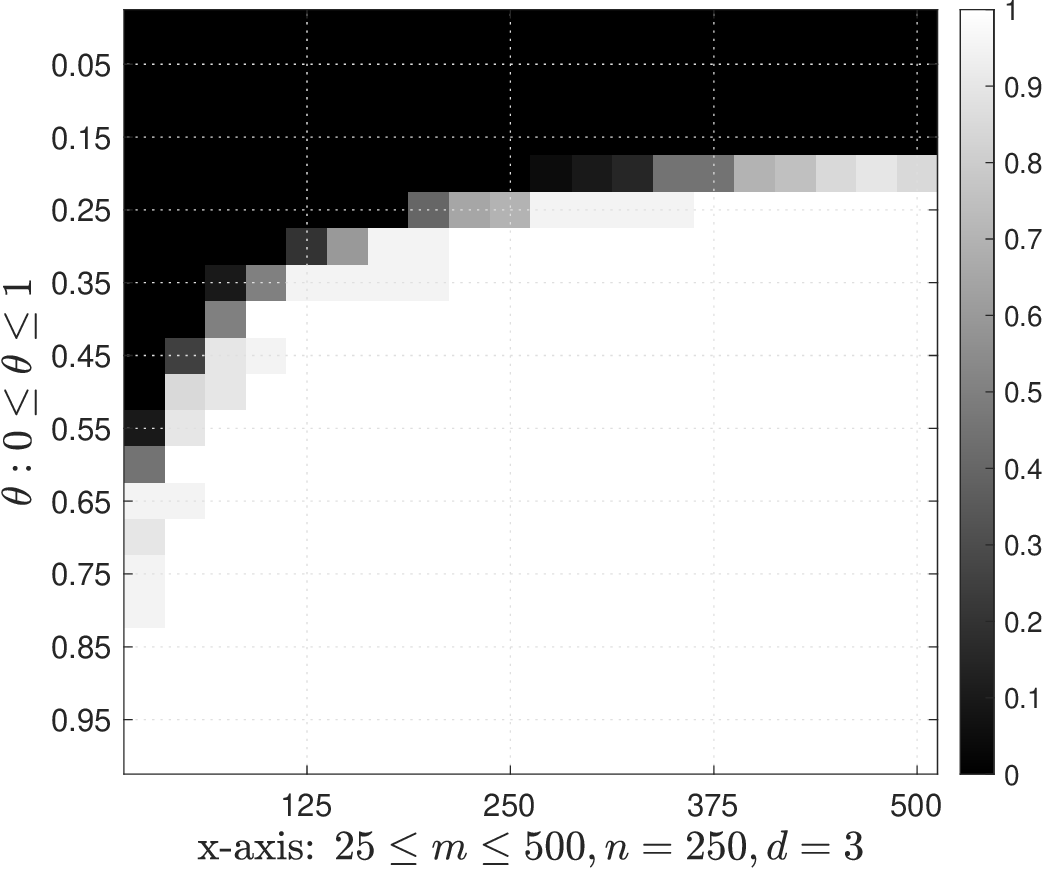}
\end{minipage}
\hfill
\begin{minipage}{0.48\textwidth}
\includegraphics[width=60mm]{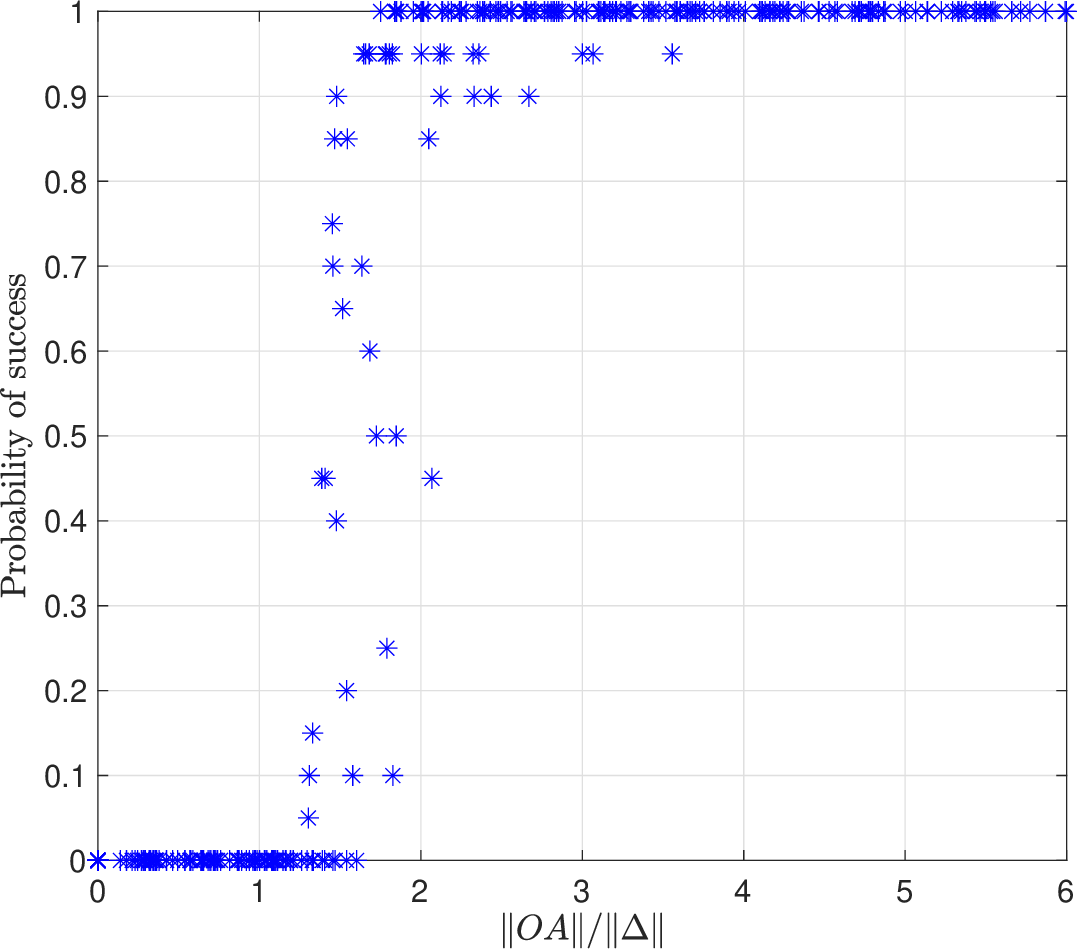}
\end{minipage}
\vfill
\vskip0.2cm
\begin{minipage}{0.48\textwidth}
\centering
\includegraphics[width=65mm]{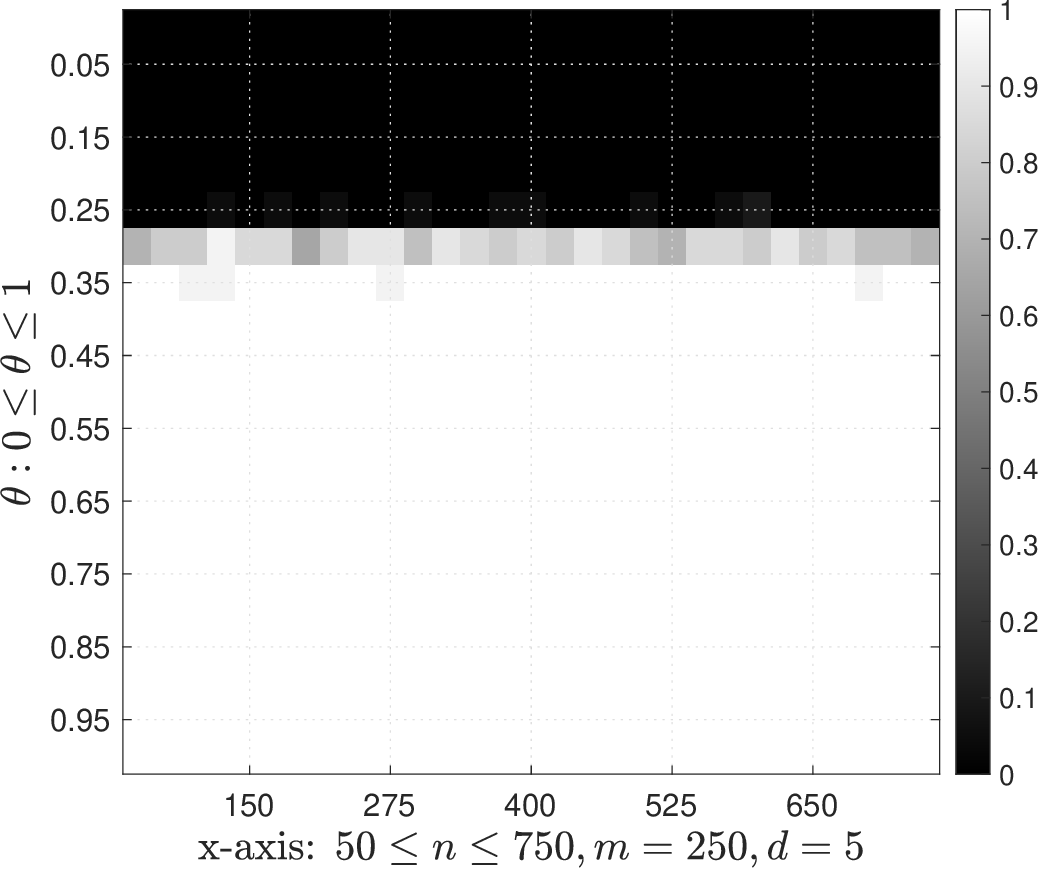}
\end{minipage}
\hfill
\begin{minipage}{0.48\textwidth}
\includegraphics[width=60mm]{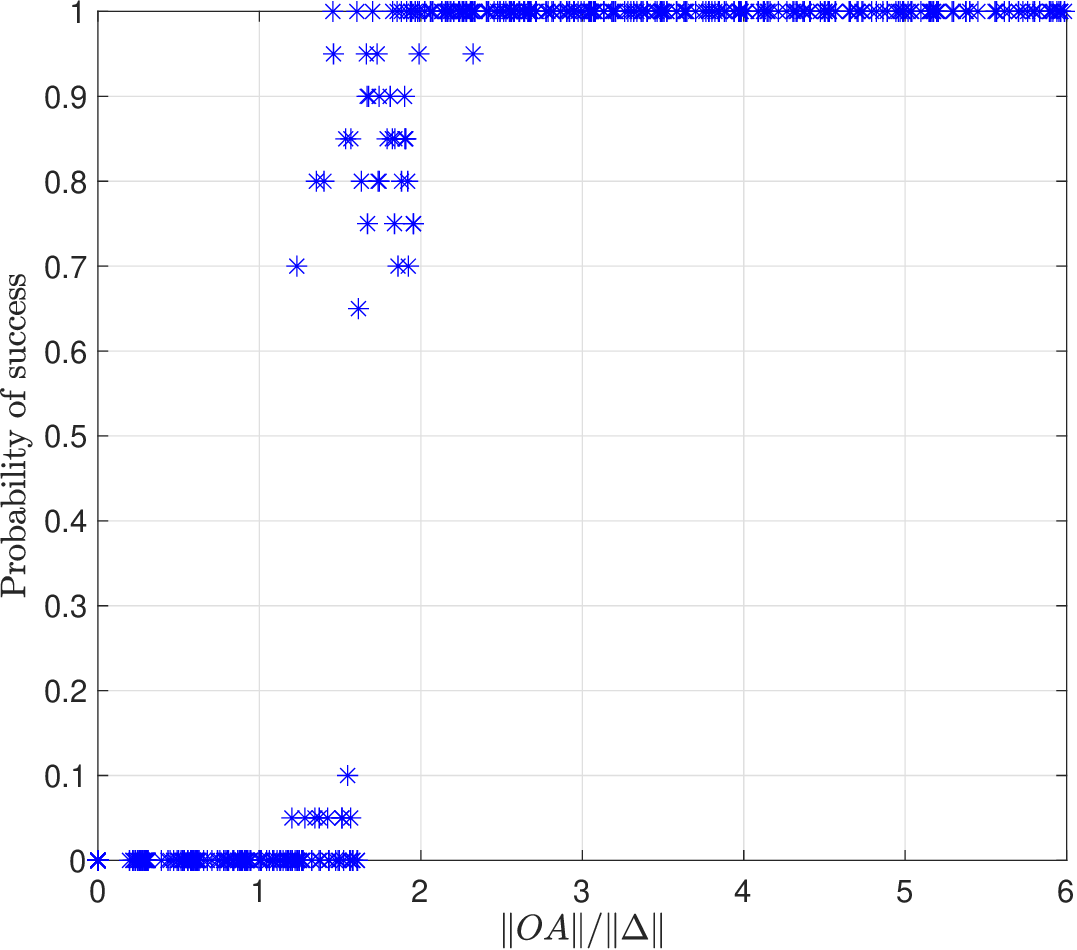}
\end{minipage}
\vfill
\vskip0.2cm
\begin{minipage}{0.48\textwidth}
\centering
\includegraphics[width=65mm]{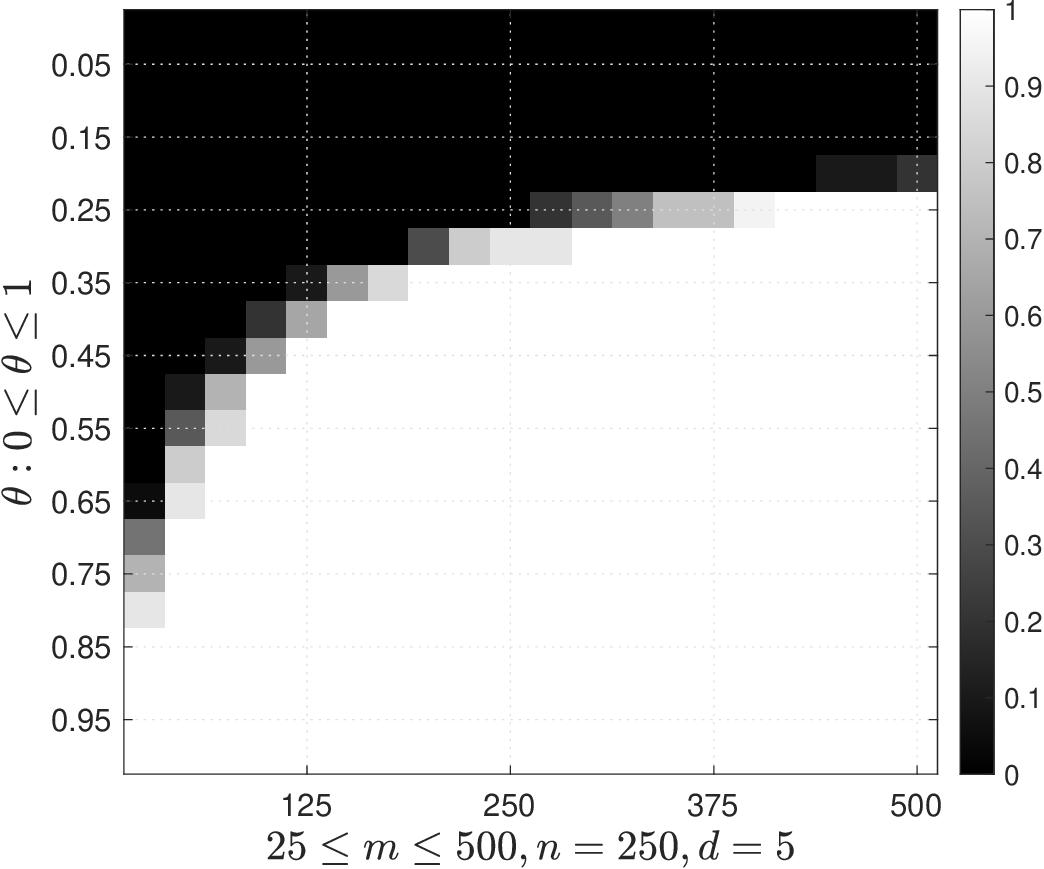}
\end{minipage}
\hfill
\begin{minipage}{0.48\textwidth}
\includegraphics[width=60mm]{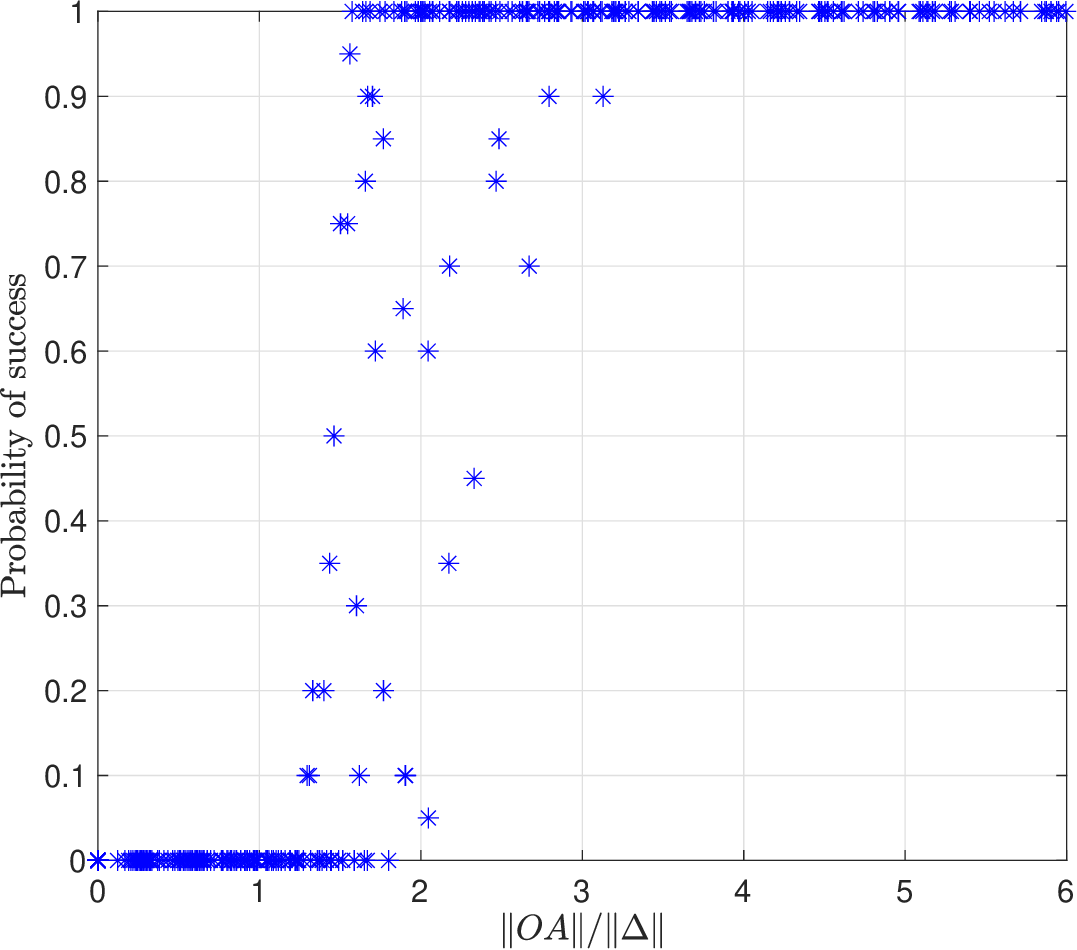}
\end{minipage}
\caption{Phase transition for the tightness/global convergence of the GPM. Left column: black region: the tightness of the~\eqref{def:gpm} fails; white region: the tightness holds; Right column: the frequency of success v.s. $\|\BO\BA\|/\|\BDelta\|$ }
\label{fig:1}
\end{figure}

\vskip0.2cm
%\subsection{Discussion on optimality and sub-optimality}
Here we provide a brief discussion on how optimal our result is.
Given $\BD = \BO\BA + \BDelta$, we define another notion of the {signal-to-noise} ratio as:{
\begin{equation}\label{def:snrs}
\SNR_S : = \frac{\|\BO\BA\|}{\|\BDelta\|} = \frac{\sqrt{n}\|\BA\|}{\|\BDelta\|}
\end{equation}
where $\|\BO\BA\| = \sqrt{n}\|\BA\|.$}
This notion of the SNR arises naturally in bounding the estimator of $\BO$ via spectral method, i.e., extracting the top $d$ left singular values $\BU$ of $\BD$ with $\BU^{\top}\BU = n\I_d$, and then it holds from Davis-Kahan theorem that 
\[
\frac{1}{\sqrt{n}}\min_{\BQ\in\Od(d)}\|\BU - \BO\BQ\| \leq \frac{2\|\BDelta \|}{\sqrt{n}\sigma_{\min}(\BA) - \|\BDelta\|} = \frac{2}{\kappa^{-1}\SNR_S - 1}.
\]
This implies that the spectral method provides a non-trivial error bound if $\SNR_S$ is at least greater than a constant. In our numerical experiments indicate that if $\SNR_S \geq 2$ for both two models, then the global convergence of the GPM holds. Now we proceed to compare $\SNR$ and $\SNR_S$ whose difference is mainly on the measure of noise. 
Let{
\[
\xi : = \frac{\sqrt{n}\max\|\BDelta_i\|}{\|\BDelta\|}\geq 1~~ \text{ and then }~~ \SNR_S = \frac{\sqrt{n}\|\BA\|}{\sqrt{n}\xi^{-1}\max\|\BDelta_i\|} = \kappa\xi \SNR.
\]
The reason why $\xi$ is defined in this form is: if $\xi$ is of order 1, then $\SNR_S$ and $\SNR$ are of the same order. In particular, if $\{\BDelta_i\}_{i=1}^n$ are i.i.d. Gaussian random matrices and $m/d$ is not too large, then $\xi$ is close to 1. The factor $\xi$ determines how much $\SNR$ is different from $\SNR_S.$
}

Based on  $\SNR_S\geq 2$ from our numerical experiments, we have a bound in terms of the $\SNR$:
\[
\SNR = \frac{\SNR_S}{\kappa\xi} \geq \frac{2}{\kappa\xi}
\]
which differs from our bound in Theorem~\ref{thm:main} and~\ref{thm:gpm} by a factor that involves $\xi, \kappa$ and $d$. The main difference comes from $\xi$ as it may vary significantly in specific cases. In particular, if $\xi$ is large, then our bound is far away from the phase transition boundary in the simulations. {Now we roughly approximate a lower bound of $\xi$ in the two applications. Suppose $\xi_-$ is a lower bound, then as long as $\SNR \geq \SNR_S/(\kappa \xi_-) \geq 2/(\kappa \xi_-)$, the tightness of SDR holds which follows from
\[
\SNR_S = \kappa \xi \SNR \geq \frac{2\kappa \xi}{\kappa \xi_-} \geq 2.
\]
In both models, if $m\gg d$ and $n$ are large, then $\xi_-$ is large and our bound is sub-optimal.}

\paragraph{Gaussian noise model}{
For $\BDelta=\sigma\BW$, then we have $\|\BW\| \leq \sqrt{m} + \sqrt{nd} + 2\sqrt{n\log n}$, and $\max\|\BW_i\| \geq  \sqrt{m} - \sqrt{d}- 2\sqrt{\log n}.$
Then
\[
\xi \geq \frac{ \sqrt{m} - \sqrt{d} - 2\sqrt{\log n}}{\sqrt{m/n}+\sqrt{d}  + 2\sqrt{\log n}} \gtrsim \min \left\{ \sqrt{n}, \sqrt{\frac{m}{d}} \right\}
\]
for some constant $C>0$ where $m\geq d+1$. 
 In other words, if $m/d$ and $n$ are large, then our bound is not tight.}

\paragraph{Uniform corruption model} {
For the uniform corruption model~\eqref{def:modelbinary}, we note that
\[
\E \BDelta^{\top}\BDelta = n(1-\theta^2)\I_m,~~~\E \BDelta_i\BDelta_i^{\top} = (1-\theta)\left(\theta\BA\BA^{\top} + \frac{m}{d}\I_d\right).
\]
For $m$ and $n$ sufficiently large, $\BDelta^{\top}\BDelta$ and $\BDelta_i\BDelta_i^{\top}$ will not deviate from their respective expectations much. As a result, we can argue that $\|\BDelta_i\|$ has a lower bound of order at least  $(1-\theta)\sqrt{m/d}$ and $\|\BDelta\|$ is of order at most $\sqrt{(1-\theta^2)n}.$
Therefore, we have
\[
\xi = \frac{\sqrt{n}\max\|\BDelta_i\|}{\|\BDelta\|} \gtrsim \sqrt{\frac{m}{d}}.
\]
This implies that if $m/d$ and $n$ are large, then $\xi\gg 1$ and our theorem is sub-optimal.}

\section{Proofs}\label{s:proof}

Before we proceed to the proof, we first introduce a few notations and two useful theorems that will be used later. 
The distance {$d_F(\cdot,\cdot)$ in~\eqref{def:df}} can be computed exactly by using
\begin{equation*}
d_F^2(\BX,\BY) = \|\BX\|_F^2 + \|\BY\|_F^2- 2\|\BX^{\top}\BY\|_*
\end{equation*}
where $\|\cdot\|_*$ denotes the matrix nuclear norm, i.e., the sum of singular values.
In particular, if $\BX$ and $\BY$ are both in $\Od(d)^{\otimes n}$, then
\begin{equation}\label{def:df2}
d_F^2(\BX,\BY) = 2nd - 2\|\BX^{\top}\BY\|_*
\end{equation}

To measure how close a given element $\BS\in\Od(d)^{\otimes n}$ is to the ground truth state, i.e., $\BO\BQ\in\Od(d)^{\otimes n}$ for some $\BQ\in\Od(d,p)$, we define an $\eps$-neighborhood around $\BO$ in $\Od(d)^{\otimes n}\subset\RR^{nd\times d}$:{
\begin{equation}\label{def:neps}
\mathcal{N}_{\eps} : = \left\{ \BS\in\Od(d)^{\otimes n}: d_F(\BS,\BO)\leq \eps\sqrt{nd}\right\}.
\end{equation}}
For any $\BS\in\mathcal{N}_{\eps}$, we have
\begin{equation}\label{eq:sigmaZS}
\left(1-\frac{\eps^2d}{2}\right)n \leq \sigma_{\min}(\BS^{\top}\BO) \leq \sigma_{\max}(\BS^{\top}\BO)\leq n
\end{equation}
which follows from $2nd - 2\|\BS^{\top}\BO\|_* \leq \eps^2 nd$ and $\sigma_{\max}(\BS^{\top}\BO)\leq n$: { note that
\[
\frac{\eps^2 nd}{2} \geq nd - \sum_{i=1}^d \sigma_i(\BS^{\top}\BO) = \sum_{i=1}^d (n - \sigma_i(\BS^{\top}\BO)) \geq n - \sigma_{\min}(\BS^{\top}\BO)
\] 
since $0\leq\sigma_{i}(\BS^{\top}\BO)\leq n.$ Then we have a lower bound on the smallest singular value of $\BS^{\top}\BO$ for any $\BS\in\mathcal{N}_{\eps}.$
}

One important supporting theorem is the optimality condition of~\eqref{def:sdr}. This result provides a simple sufficient condition to certify whether a candidate solution is a global minimizer to~\eqref{def:od} and~\eqref{def:sdr}.
\begin{theorem}[\bf Characterization of global optimality]\label{thm:cvx}
The matrix {$\BZ = \BS\BS^{\top}$}with $\BS\in \Od(d)^{\otimes n}$ is a global minimizer to~\eqref{def:sdr} if there exists a block-diagonal matrix \[
\BLambda = \blkdiag(\BLambda_{11},\cdots,\BLambda_{nn})\in\RR^{nd\times nd}
\] 
with $\BLambda_{ii}\in\RR^{d\times d}$ such that
\begin{equation}\label{cond:opt}
\BC\BS = \BLambda \BS, \quad \BLambda - \BC\succeq 0.
\end{equation}
Moreover, if $\BLambda - \BC$ is of rank $(n-1)d$, then $\BX$ is the unique global minimizer, and $\BS$ is rank-$d$ and equals the global minimizer to~\eqref{def:od}.
\end{theorem}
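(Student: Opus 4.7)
The plan is to verify Theorem~\ref{thm:cvx} by the standard SDP duality route, treating $\BLambda$ as a dual certificate for~\eqref{def:sdp}. The sufficient conditions~\eqref{cond:opt} are exactly the KKT conditions, so the argument reduces to verifying weak duality and complementary slackness, plus a rank-count argument for uniqueness.

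First, I would derive the dual of~\eqref{def:sdp}. Introducing a block-diagonal symmetric Lagrange multiplier $\BLambda = \blkdiag(\BLambda_{11},\ldots,\BLambda_{nn})$ for the diagonal constraints $\BG_{ii}=\I_d$ gives the dual
\begin{equation*}
\min~\tr(\BLambda)\quad\text{s.t.}\quad \BLambda-\BC\succeq 0,~\BLambda \text{ block-diagonal}.
\end{equation*}
For any primal feasible $\BG$ and dual feasible $\BLambda$, the identity
\begin{equation*}
\tr(\BLambda)-\lag\BC,\BG\rag=\lag\BLambda,\BG\rag-\lag\BC,\BG\rag=\lag\BLambda-\BC,\BG\rag\geq 0
\end{equation*}
gives weak duality, where the first equality uses $\lag\BLambda,\BG\rag=\sum_i\lag\BLambda_{ii},\I_d\rag=\tr(\BLambda)$ since $\BG_{ii}=\I_d$.

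Next, I would show the candidate pair $(\BX=\BS\BS^\top,\BLambda)$ certifies itself. Feasibility of $\BX$ is immediate from $\BS\in\St(d,p)^{\otimes n}$. Feasibility of $\BLambda$ is the assumption $\BLambda-\BC\succeq 0$. For complementary slackness, $\BC\BS=\BLambda\BS$ implies $(\BLambda-\BC)\BS\BS^\top=0$, so $\lag\BLambda-\BC,\BX\rag=0$. Combined with weak duality, this shows $\BX$ attains the SDP value, establishing the first claim.

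For the uniqueness claim under $\rank(\BLambda-\BC)=(n-1)d$, suppose $\BG^\star$ is any other SDP optimizer. Equality in weak duality forces $\lag\BLambda-\BC,\BG^\star\rag=0$; since both factors are PSD, this yields $(\BLambda-\BC)\BG^\star=0$, so every column of $\BG^\star$ lies in $\Null(\BLambda-\BC)$, which has dimension exactly $d$ by hypothesis. The same conclusion applied to $\BX$ shows $\Ran(\BS)\subseteq\Null(\BLambda-\BC)$; but each diagonal block $\BS_i\BS_i^\top=\I_d$ has rank $d$, so $\BS$ itself has rank $d$, forcing $\Ran(\BS)=\Null(\BLambda-\BC)$. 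Writing the thin SVD $\BS=\BU\BSigma\BV^\top$ with $\BU\in\RR^{nd\times d}$ orthonormal and $\BSigma\succ 0$, we have $\BX=\BU\BSigma^2\BU^\top$ and $\BG^\star=\BU\BK\BU^\top$ for some PSD $\BK\in\RR^{d\times d}$. The diagonal constraints yield $\BU_i(\BK-\BSigma^2)\BU_i^\top=0$ for every $i$; since $\BU_i\BSigma^2\BU_i^\top=\I_d$ and $\BSigma^2\succ 0$, each $\BU_i$ is invertible, hence $\BK=\BSigma^2$ and $\BG^\star=\BX$.

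Finally, the second sentence of the moreover clause follows by packaging: $\BX$ has rank $d$, so writing $\BX=\BS'(\BS')^\top$ with $\BS'\in\Od(d)^{\otimes n}$ produces a feasible point of~\eqref{def:od} achieving the SDP value; since~$\eqref{def:od}\leq\eqref{def:sdp}$ always, $\BS'$ is a global maximizer of~\eqref{def:od} (uniqueness being modulo the global $\Od(d)$ action $\BS'\mapsto\BS'\BQ$). The main obstacle I anticipate is the cleanest way to articulate this last rank/uniqueness step; the algebra itself is routine, but one must be careful that ``$\BS$ is rank-$d$'' is an automatic consequence of the rank condition on $\BLambda-\BC$ together with $\BS_i\BS_i^\top=\I_d$, rather than an additional hypothesis.
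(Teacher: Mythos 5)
Your proof is correct and takes exactly the ``standard routine of duality theory'' route that the paper invokes without spelling out (the paper simply refers to \cite[Theorem 3.4]{B15}, \cite[Proposition 5.1]{L20a}, and \cite[Theorem 7]{RCBL19}); the weak-duality identity, the complementary-slackness step $(\BLambda-\BC)\BS\BS^{\top}=0$, and the rank-count/thin-SVD argument for uniqueness are all in order, and you correctly observe that $\rank(\BS)\geq d$ from the block constraint while $\rank(\BS)\leq d$ from $\Ran(\BS)\subseteq\Null(\BLambda-\BC)$.
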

This optimality condition can be found in several places including~\cite[Theorem 3.4]{B15},~\cite[Proposition 5.1]{L20a} and~\cite[Theorem 7]{RCBL19}, whose proof follows from the standard routine of duality theory in convex optimization. 
The other supporting result is the very well-known contraction principle, which will be used to analyze the convergence of the generalized power method.

\begin{theorem}\label{thm:banach}{\bf Banach contraction principle,~\cite[Theorem 1.1]{GD03}}
Let $({\cal X},d)$ be a complete metric space and ${\cal T}: {\cal X}\rightarrow {\cal X}$ be a contraction map, i.e.,
\[
d({\cal T}(x),{\cal T}(y))\leq \rho d(x,y), \quad\forall x,y\in {\cal X}
\] 
for an absolute positive constant $\rho<1$.
Then there is a unique fixed point $u$ and ${\cal T}^t(x)$ converges to $u$ as $t\rightarrow \infty$ for each $x\in {\cal X}$ with
\[
d({\cal T}^t(x), u) \leq \rho^n d(x, u).
\]
\end{theorem}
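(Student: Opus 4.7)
The plan is to prove this classical Banach fixed-point theorem via the standard iteration argument: pick any starting point, iterate $\mathcal{T}$, show the orbit is Cauchy, invoke completeness to get a limit, and then verify that limit is the unique fixed point with the claimed geometric convergence rate. I will never need more than the contraction inequality and the metric space axioms.

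First, fix an arbitrary $x\in\mathcal{X}$ and let $x_n := \mathcal{T}^n(x)$. The key preliminary estimate is $d(x_{n+1},x_n)\leq \rho^n d(x_1,x_0)$, which follows immediately by induction from the contraction property applied to consecutive iterates. From here, for any $m > n$, the triangle inequality and the geometric series give
\[
d(x_m,x_n) \leq \sum_{k=n}^{m-1} d(x_{k+1},x_k) \leq \sum_{k=n}^{m-1} \rho^k d(x_1,x_0) \leq \frac{\rho^n}{1-\rho}\, d(x_1,x_0),
\]
so $\{x_n\}$ is Cauchy. Since $(\mathcal{X},d)$ is complete, $x_n\to u$ for some $u\in\mathcal{X}$.

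Next I would verify $u$ is a fixed point. The contraction property forces $\mathcal{T}$ to be Lipschitz and hence continuous, so passing to the limit in $x_{n+1}=\mathcal{T}(x_n)$ yields $u=\mathcal{T}(u)$. Uniqueness is a one-line argument: if $u$ and $v$ were both fixed points, then $d(u,v) = d(\mathcal{T}(u),\mathcal{T}(v)) \leq \rho\, d(u,v)$, and since $\rho<1$ this forces $d(u,v)=0$, i.e., $u=v$. Finally, the claimed rate $d(\mathcal{T}^n(x),u)\leq \rho^n d(x,u)$ follows by applying the contraction inequality $n$ times to the pair $(x,u)$, using $\mathcal{T}^n(u)=u$:
\[
d(\mathcal{T}^n(x), u) = d(\mathcal{T}^n(x), \mathcal{T}^n(u)) \leq \rho\, d(\mathcal{T}^{n-1}(x), \mathcal{T}^{n-1}(u)) \leq \cdots \leq \rho^n d(x,u).
\]

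There is no real obstacle here; the only step that requires a little care is the Cauchy estimate, where one must be sure to sum the geometric tail correctly so that the bound is uniform in $m$. Once the sequence is Cauchy, completeness does all the remaining work, and continuity of $\mathcal{T}$ (a free consequence of the contraction bound) turns the limit into a fixed point. Uniqueness and the geometric decay rate are then immediate corollaries of the same single inequality that defines a contraction.
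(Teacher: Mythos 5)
Your proof is correct and is the standard textbook argument for the Banach fixed-point theorem. The paper does not supply its own proof of this statement --- it simply cites it as \cite[Theorem 1.1]{GD03} --- so there is nothing to compare against; your iteration-plus-Cauchy argument, with uniqueness from $d(u,v)\leq\rho\,d(u,v)$ and the rate from $n$-fold application of the contraction to the pair $(x,u)$, is exactly the canonical derivation.
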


\subsection{An overview of the proof}\label{ss:gpm}
Define the operator ${\cal T}$ on $\RR^{nd\times d}$:
\begin{equation}\label{def:T}
{\cal T}(\BX) : = {\cal P}_n (\BC \BX),
\end{equation}
for any $\BX\in\Od(d)^{\otimes n}$
%as ${\cal P}_n$ is invariant to the positive scaling, i.e., ${\cal P}_n(\alpha \BX) = {\cal P}_n(\BX)$ for $\alpha>0.$ 
and Algorithm~\ref{algo1} becomes $\BO^{t+1} = {\cal T}(\BO^t)$.
Under certain conditions on $\eps$ and $\SNR$, we will first prove ${\cal T}$ is a contraction mapping on $\mathcal{N}_{\eps}$; moreover, for any $\BX\in\mathcal{N}_{\eps}$, ${\cal T}(\BX)$ still stays in $\mathcal{N}_{\eps}$ and ${\cal T}^t(\BX)$ converges to a unique fixed point and the limiting point corresponds to the global minimizer to the SDR. 

For the tightness of SDR and the convergence of generalized power method, we consider two sets of conditions on $\eps$ and $\SNR$.
\begin{itemize}
\item Condition ($a$) for the tightness of SDR:
\begin{equation}\label{def:casea}
\eps \sqrt{d} = 0.19\kappa^{-2},~~\SNR\geq 30\kappa^3\sqrt{d}.
\end{equation}
\item Condition ($b$) for the convergence of GPM with spectral initialization:
\begin{equation}\label{def:caseb}
\eps \sqrt{d} = 7\kappa^2\sqrt{d} \SNR^{-1},~~\SNR\geq 35\kappa^4\sqrt{d}.
\end{equation}
\end{itemize}

\begin{proposition}\label{prop:key}
${\cal T}$ is a contraction mapping from $\mathcal{N}_{\eps}$ to $\mathcal{N}_{\eps}$ under both~\eqref{def:casea} and~\eqref{def:caseb}. 
\end{proposition}
The proof of Proposition~\ref{prop:key} is a direct consequence of Proposition~\ref{prop:con} and~\ref{prop:basin}.

\begin{proposition}[\bf Lipschitz continuity of ${\cal T}$ on $\mathcal{N}_{\eps}$]\label{prop:con}
For $\BX$ and $\BY\in\mathcal{N}_{\eps}$, it holds that
\begin{align*}
d_F({\cal T}(\BX), {\cal T}(\BY)) 
& \leq  \rho d_F(\BX,\BY)
\end{align*}
where
\begin{equation}\label{cond:con}
\rho(\eps,\SNR) :=  \frac{2\kappa^2\left(2\eps\sqrt{d} + (2+ \SNR^{-1}) \SNR^{-1} \right) }{ 1 - \left(\eps^2d/2 +  \kappa(2 + \SNR^{-1})\SNR^{-1}\right) }.
\end{equation}

\end{proposition}

\begin{proposition}[\bf ${\cal T}$ maps $\mathcal{N}_{\eps}$ to itself]\label{prop:basin}
Suppose
\begin{equation}\label{cond:basin}
\eta(\eps,\SNR): = \frac{4\kappa^2\left(\eps^2 d/2  + \kappa^{-1}\sqrt{d}(2+\SNR^{-1})\SNR^{-1}\right) }{  2 - \left( \eps^2d/2 + \kappa (2 + \SNR^{-1})\SNR^{-1}\right) } \leq \eps\sqrt{d} 
\end{equation}
then we have ${\cal T}(\BX)\in\mathcal{N}_{\eps}$ for any $\BX\in\mathcal{N}_{\eps}$.
\end{proposition}

\begin{proof}[\bf Proof of Proposition~\ref{prop:key}]
Proposition~\ref{prop:con} and~\ref{prop:basin} imply that ${\cal T}$ is a contraction mapping on $\mathcal{N}_{\eps}$ if $\rho(\eps,\SNR) < 1$ and $\eta(\eps,\SNR) \leq \eps\sqrt{nd}$. We will now verify them under~\eqref{def:casea} and~\eqref{def:caseb}.

For Condition $(a)$ in~\eqref{def:casea}, we have $\eps\sqrt{d} = 0.19\kappa^{-2}$ and $\SNR\geq 30\kappa^3\sqrt{d} \geq 30$. It holds that{
\begin{align*}
& \rho \leq \frac{2\kappa^2\left(0.38\kappa^{-2} + (2+ 30^{-1}\kappa^{-3}) 30^{-1}\kappa^{-3} \right) }{ 1 - 0.19^2\kappa^{-4}/2 -  \kappa(2 + 30^{-1}\kappa^{-3})30^{-1}\kappa^{-3} } 
{\leq \frac{2\left(0.38 + (2+ 30^{-1}) 30^{-1} \right) }{ 1 - 0.19^2/2 -  (2 + 30^{-1})30^{-1}}  < 1},\\
& \eta \leq \frac{4\kappa^{-2}\left(0.19^2/2  + 30^{-1}(2+30^{-1}\kappa^{-3}) \right) }{  2 - \left( 0.19^2\kappa^{-4}/2 + 30^{-1}\kappa^{-2} (2 +30^{-1}\kappa^{-3})\right) } 
{\leq \frac{4\kappa^{-2}\left(0.19^2/2  + 30^{-1}(2+30^{-1}) \right) }{  2 - \left( 0.19^2/2 + 30^{-1} (2 +30^{-1})\right) }}
 \leq \frac{0.18}{\kappa^2}  \leq \eps\sqrt{d},
\end{align*}}
where $\kappa\geq 1$ and $d\geq 1.$ Then ${\cal T}$ is a contraction mapping under~\eqref{def:casea}.

For Condition $(b)$ in~\eqref{def:caseb}, we have $\eps \sqrt{d} = 7\kappa^2\sqrt{d} \SNR^{-1}$, $\SNR\geq 35\kappa^4\sqrt{d}$, and thus $\eps^2 d \leq 1.4\sqrt{d}\SNR^{-1}. $ Then 
\begin{align*}
\rho & \leq  \frac{2\kappa^2\left(14\kappa^2\sqrt{d} + 2+ \SNR^{-1}  \right)\SNR^{-1} }{ 1 - \left(0.7\sqrt{d} +  \kappa(2 + \SNR^{-1})\right)\SNR^{-1} }\leq 1, \\
\eta  & \leq \frac{4\kappa^2\sqrt{d} \left(0.7  + \kappa^{-1}(2+\SNR^{-1})\right)\SNR^{-1} }{  2 - \left( 0.7\sqrt{d} + \kappa (2 + \SNR^{-1})\right)\SNR^{-1} }  \leq 6\kappa^2 \sqrt{d}\SNR^{-1}  \leq \eps\sqrt{d},
\end{align*}
for $\SNR\geq 35\kappa^4\sqrt{d}.$ 
\end{proof}

The following proposition guarantees that the spectral initialization in Algorithm~\ref{algo1} will produce an initial guess $\BO^0$ in $\mathcal{N}_{\eps}$.
\begin{proposition}[\bf Spectral initialization]\label{prop:init}
If the SNR in~\eqref{def:snr} satisfies $\SNR > 1$, then
\[
d_F(\BO^0,\BO) = \min_{\BQ\in\Od(d)}\|\BO^0 - \BO\BQ\|_F \leq \frac{6\kappa^2 (\SNR+1)^2}{(\SNR-1)^2}  \SNR^{-1}\sqrt{nd}.
\]
In addition, if $\SNR \geq 35\kappa^4\sqrt{d}$ and $\eps = 7\kappa^2\SNR^{-1}$, i.e., Condition ($b$) in~\eqref{def:caseb} holds, then
\[
d_F(\BO^0,\BO) \leq 7\kappa^2 \SNR^{-1}{\sqrt{nd}},
\]
the spectral initialization produces $\BO^0\in\mathcal{N}_{\eps}.$ 
\end{proposition}

\subsection{${\cal T}$ is a contraction mapping on $\mathcal{N}_{\eps}$}

We start with providing a few supporting lemmas that will be used to prove Proposition~\ref{prop:con} and~\ref{prop:basin}.
Recall that
\begin{equation}\label{def:bardelta}
\BC = \BO\BA\BA^{\top}\BO^{\top} + \bar{\BDelta},~~\bar{\BDelta} := \BDelta\BA^{\top}\BO^{\top} + \BO\BA\BDelta^{\top} + \BDelta\BDelta^{\top}.
\end{equation}

\begin{lemma}\label{lem:sigmamin}
The operator norm of $\bar{\BDelta}$ is bounded by
\begin{equation}\label{eq:bardelta}
\|\bar{\BDelta}\| \leq n(2+\SNR^{-1})\|\BA\|\max\|\BDelta_i\| = n \|\BA\|\sigma_{\min}(\BA)(2+\SNR^{-1})\SNR^{-1}.
\end{equation}
For any $\BX\in \mathcal{N}_{\eps}$, it holds for $1\leq i\leq n$,
\begin{align*}
\max\|\bar{\BDelta}_i^{\top}\BX\| & \leq n\|\BA\| \sigma_{\min}(\BA) (2+\SNR^{-1})\SNR^{-1}, \\
\sigma_{\max}([\BC\BX]_i) & \leq n\|\BA\|^2\left(1 + \kappa^{-1}(2+\SNR^{-1})\SNR^{-1}\right),  \\
\sigma_{\min}( [\BC\BX]_i ) & \geq n\|\BA\|^2\kappa^{-2} \left( 1 - \eps^2d/2 - \kappa (2+\SNR^{-1})\SNR^{-1}\right),
\end{align*}
where $\bar{\BDelta}_i = \BDelta(\BO_i\BA +  \BDelta_i)^{\top} + \BO\BA\BDelta_i^{\top}$ is the $i$-th block column of $\bar{\BDelta}.$
\end{lemma}

{Lemma~\ref{lem:sigmamin} is essential and also is different from~\cite{L23d}. Here we do not assume any statistical property of noise and thus use linear algebra to have a deterministic bound on $\|\bar{\BDelta}\BX\|_F$. It could be suboptimal under many statistical scenarios. On the other hand, the Gaussianity of noise was exploited in~\cite{L23d} and we can carry out a careful analysis  to control $\|\bar{\BDelta}\BX^t\|_F$ via the leave-one-out technique where $\BX^t$ is the iterate from the GPM.}

\begin{proof}[\bf Proof of Lemma~\ref{lem:sigmamin}]
The operator norm of $\bar{\BDelta}$ satisfies
\begin{align*}
\|\bar{\BDelta}\| & \leq 2\sqrt{n}\|\BA\|\| \BDelta\|  + \|\BDelta\|^2 = \sqrt{n}\|\BA\|\|\BDelta\|\left(2 + \frac{\|\BDelta\|}{\sqrt{n}\|\BA\|}\right) \leq n(2+\SNR^{-1})\|\BA\|\max\|\BDelta_i\| 
\end{align*}
where $\|\BDelta\|\leq \sqrt{n}\max\|\BDelta_i\| = \sqrt{n}\sigma_{\min}(\BA) \SNR^{-1}\leq \sqrt{n}\|\BA\|\SNR^{-1}.$

We consider an upper bound of $[\BC \BX]_i = \BO_i\BA\BA^{\top}\BO^{\top}\BX + \bar{\BDelta}_i^{\top}\BX$:
\begin{align}
\|\bar{\BDelta}_i^{\top}\BX\| 
& =   \|\BDelta_i\BA^{\top}\BO^{\top}\BX + (\BO_i\BA +  \BDelta_i) \BDelta^{\top}\BX\| \nonumber \\
& \leq  n\|\BA\| \|\BDelta_i\|+ n(\|\BA\| + \|\BDelta_i \|) \max\|\BDelta_i\|  \nonumber  \\
& \leq n\|\BA\| \sigma_{\min}(\BA) (2 + \SNR^{-1})\SNR^{-1} \label{eq:DeltaS}
\end{align}
where $\|\BO^{\top}\BX\| \leq n$ and $\|\BDelta^{\top}\BX\| \leq \sqrt{n}\|\BDelta\| \leq n\max\|\BDelta_i\|.$

For the lower bound of $\sigma_{\min}([\BC\BX]_i)$, it holds that
\begin{align*}
\sigma_{\min}([\BC\BX]_i) & \geq \sigma_{\min}(\BO_i\BA\BA^{\top}\BO^{\top}\BX) - \|\bar{\BDelta}_i^{\top}\BX\| \geq  \sigma_{\min}^2(\BA) \sigma_{\min}(\BO^{\top}\BX) - \|\bar{\BDelta}_i^{\top}\BX\|\\
& \geq  \sigma_{\min}^2(\BA) \left( 1 - \frac{\eps^2d}{2}\right) - n\|\BA\| \sigma_{\min}(\BA) (2 + \SNR^{-1})\SNR^{-1} \\
& \geq n\|\BA\|^2 \kappa^{-2} \left( 1 - \eps^2d/2 - \kappa (2+\SNR^{-1})\SNR^{-1}\right)
\end{align*}
where the singular values of $\BO^{\top}\BX$ satisfy~\eqref{eq:sigmaZS}.
Similarly, we have
\begin{align*}
\sigma_{\max}([\BC\BX]_i) & \leq  \|\BA\BA^{\top}\BO^{\top}\BX\| + \|\bar{\BDelta}_i^{\top}\BX \|  \leq n\|\BA\|^2 + n\|\BA\| \sigma_{\min}(\BA) (2 + \SNR^{-1})\SNR^{-1} \\
&  \leq n\|\BA\|^2\left(1 + \kappa^{-1}(2+\SNR^{-1})\SNR^{-1}\right).
\end{align*}
\end{proof}

\begin{lemma}\label{lem:conL}
For any $\BX$ and $\BY$ in $\mathcal{N}_{\eps}\subseteq\Od(d)^{\otimes n}$, we have
\begin{align*}
d_F(\BC\BX, \BC\BY) & \leq n\|\BA\|^2\left( 2\eps\sqrt{d} + (2+ \SNR^{-1}) \SNR^{-1} \right) d_F(\BX,\BY).
\end{align*}

\end{lemma}
\begin{proof}
Let $\BQ$ be the orthogonal matrix which makes $d_F(\BX,\BY) = \|\BX - \BY\BQ\|_F$, i.e., $\BQ=\PP(\BY^{\top}\BX).$
\begin{align*}
\|\BC(\BX-\BY\BQ)\|_F 
& \leq  \| \BO\BA\BA^{\top}\BO^{\top}(\BX - \BY\BQ) \|_F +  \|\bar{\BDelta}(\BX - \BY\BQ)\|_F \\
& \leq \sqrt{n}\|\BA\|^2 \|\BO^{\top}(\BX - \BY\BQ)\|_F + \|\bar{\BDelta}\| \|\BX - \BY\BQ\|_F \\
& \leq \sqrt{n}\|\BA\|^2 \|\BO^{\top}(\BX - \BY\BQ)\|_F + n(2+ \SNR^{-1})\|\BA\|\max\|\BDelta_i\|  \|\BX - \BY\BQ\|_F \\
& \leq n\|\BA\|^2 \left( \frac{1}{\sqrt{n}} \|\BO^{\top}(\BX - \BY\BQ)\|_F + (2+ \SNR^{-1}) \SNR^{-1} \|\BX - \BY\BQ\|_F\right) 
\end{align*}
where  $\|\bar{\BDelta}\| \leq n(2+ \SNR^{-1})\|\BA\|\max\|\BDelta_i\|$ in~\eqref{eq:bardelta}.
The first term is bounded by
\begin{align*}
\|\BO^{\top}(\BX - \BY\BQ)\|_F & \leq \| (\BY - \BO\BQ_Y)^{\top}(\BX - \BY\BQ) \|_F + \|\BY^{\top}(\BX - \BY\BQ)\|_F \\
\text{Use }\BQ_Y = \PP(\BO^{\top}\BY)~~\qquad & \leq \eps\sqrt{nd}\cdot d_F(\BX,\BY) + \|\BY^{\top}\BX - n\BQ\|_F \\
\text{Use}~\eqref{eq:YX}~~\qquad  & \leq \eps\sqrt{nd}\cdot d_F(\BX,\BY) + \frac{1}{2}\|\BX - \BY\BQ\|_F^2 \\
& \leq 2\eps\sqrt{nd}\cdot d_F(\BX,\BY)
\end{align*}
where $d_F(\BX,\BY) = \min_{\BQ\in\Od(d)}\|\BX- \BY\BQ\|_F \leq d_F(\BX,\BO) + d_F(\BY,\BO) \leq 2\eps\sqrt{nd}$ and 
\begin{equation}\label{eq:YX}
\| \BY^{\top} \BX - n\BQ\|_F \leq \frac{1}{2}\|\BX- \BY\BQ\|_F^2
\end{equation}
for any $\BX$ and $\BY\in\Od(d)^{\otimes n}$ where $\BQ : = \PP(\BY^{\top}\BX)$. Thus
\begin{align*}
d_F(\BC\BX, \BC\BY) & \leq \|\BC(\BX-\BY\BQ)\|_F \leq n\|\BA\|^2\left( 2\eps\sqrt{d} +(2+ \SNR^{-1}) \SNR^{-1} \right) d_F(\BX,\BY).
\end{align*}
The inequality~\eqref{eq:YX} follows from
\begin{align*}
\| \BY^{\top} \BX - n\BQ \|_F^2 & = \sum_{i=1}^d(n - \sigma_i(\BY^{\top}\BX))^2  \leq \left( \sum_{i=1}^d(n - \sigma_i(\BY^{\top}\BX)) \right)^2 \\
\text{Use }\eqref{def:df2} ~~\qquad & = \left(nd - \|\BY^{\top}\BX\|_*\right)^2 = \left( \frac{1}{2}\|\BX- \BY\BQ\|_F^2 \right)^2
\end{align*}
where $\sigma_i(\BY^{\top}\BX)\leq n$.
\end{proof}

\begin{lemma}\label{lem:conP}
For two invertible matrices $\BX$ and $\BY$ in $\RR^{d\times d}$, 
\[
\|\PP(\BX) - \PP(\BY)\| \leq \frac{2\|\BX - \BY\|}{\sigma_{\min}(\BX)+\sigma_{\min}(\BY)},~~~
\|\PP(\BX) - \PP(\BY)\|_F \leq \frac{4\|\BX - \BY\|_F}{\sigma_{\min}(\BX)+\sigma_{\min}(\BY)}
\]
{where $\PP(\cdot)$ is defined in~\eqref{def:proj}. }
\end{lemma}
The proof uses Davis-Kahan theorem for eigenvector perturbation~\cite{DK70}.
This perturbation bound can be found in~\cite[Theorem 1]{L95} and~\cite[Lemma 4.7]{L20c}.
It seems that to prove ${\cal T}$ is a contraction, we only need to combine Lemma~\ref{lem:conL} with Lemma~\ref{lem:conP}. 
The proof of Proposition~\ref{prop:con} follows from combining all the results above.
\begin{proof}[\bf Proof of Proposition~\ref{prop:con}]
For any $\BX$ and $\BY$ in $\mathcal{N}_{\eps}$, Lemma~\ref{lem:sigmamin} implies a lower bound of $\sigma_{\min}([\BC\BX]_i)$ and $\sigma_{\min}([\BC\BY]_i)$. Let $\BQ$ be the orthogonal matrix which minimizes $\|\BC\BX - \BC\BY\BQ\|_F$. We have 
\begin{align*}
d_F(\PP_n(\BC\BX), \PP_n(\BC\BY)) & {= \min_{\BR\in\Od(d)}\| \PP_n(\BC\BX) - \PP_n(\BC\BY)\BR  \|_F} \\
& \leq \sqrt{\sum_{i=1}^n \| \PP([\BC\BX]_i) - \PP([\BC\BY]_i)\BQ \|_F^2} \\
\text{Lemma}~\ref{lem:conP}~~ \qquad 
& \leq 4\sqrt{\sum_{i=1}^n \frac{\| [\BC\BX]_i - [\BC\BY]_i\BQ \|_F^2}{( \sigma_{\min}([\BC\BX]_i) + \sigma_{\min}([\BC\BY]_i) )^2} } \\
{\text{Lemma}~\ref{lem:sigmamin}\text{ and }~\ref{lem:conL}}~~\qquad & \leq \frac{1}{n\|\BA\|^2}\cdot \frac{2\kappa^2}{  1 - \eps^2d/2 -  \kappa(2 + \SNR^{-1})\SNR^{-1} }\cdot d_F(\BC\BX, \BC\BY) \\
& \leq\frac{2\kappa^2\left(2\eps\sqrt{d} + (2+ \SNR^{-1}) \SNR^{-1} \right) }{ 1 - \eps^2d/2 -  \kappa(2 + \SNR^{-1})\SNR^{-1} }\cdot d_F(\BX, \BY) 
\end{align*} 
which gives the expression of $\rho.$  
\end{proof}

\begin{proof}[\bf Proof of Proposition~\ref{prop:basin}]
Suppose $\BX\in\mathcal{N}_{\eps}$, then we want to show that $\PP_n(\BC\BX)\in\mathcal{N}_{\eps}$. 
Let $\BQ = \PP(\BO^{\top}\BX)$ be the minimizer to $\min_{\BQ\in\Od(d)}\|\BX - \BO\BQ\|_F$ {whose optimal value equals $d_F(\BX,\BO)$ in~\eqref{def:df2} exactly.} {Then we try to bound the distance between ${\cal T}(\BX)$ and $\BO$ via}
\begin{align}
d_F({\cal T}(\BX), \BO) & =  d_F(\PP_n(\BC\BX), \PP_n(n\BO\BA\BA^{\top})) \nonumber \\
& \leq \sqrt{ \sum_{i=1}^n \| \PP([\BC\BX]_i) - \PP(n\BO_i\BA\BA^{\top})\BQ \|_F^2}  \nonumber \\
& \leq \frac{1}{n\|\BA\|^2} \cdot \frac{4\kappa^2}{  2 - \left( \eps^2d/2 + \kappa (2 + \SNR^{-1})\SNR^{-1}\right) } \cdot \left\| \BC\BX - n\BO\BA\BA^{\top}\BQ \right\|_F \label{eq:PZ}
\end{align}
where $\sigma_{\min}(n\BO_i\BA\BA^{\top}) \geq n \|\BA\|^2\kappa^{-2}$, { ${\cal P}(n\BO_i\BA\BA^{\top}) = \BO_i {\cal P}(\BA\BA^{\top}) = \BO_i$, and thus ${\cal P}_n(n\BO\BA\BA^{\top}) = \BO.$}
{ It suffices to estimate $\|\BC\BX - n\BO\BA\BA^{\top}\BQ\|_F$. Note that for each block of $\BC\BX - n\BO\BA\BA^{\top}\BQ$, it holds that}
\begin{align*}
\left\| [\BC\BX]_i -  {n\BO_i\BA\BA^{\top}\BQ} \right\|_F
& = \left\| (\BO_i \BA\BA^{\top}\BO^{\top} \BX + \bar{\BDelta}_i^{\top}\BX)  - n \BO_i\BA\BA^{\top}\BQ\right\|_F \\
& \leq   \| \BO_i\BA\BA^{\top} (\BO^{\top}\BX - n\BQ)\|_F + \| \bar{\BDelta}_i^{\top}\BX \|_F \\
\text{Use}~\eqref{eq:DeltaS} \qquad & \leq \|\BA\|^2 \|\BO^{\top}\BX - n\BQ\|_F + n\sqrt{d}\|\BA\| \sigma_{\min}(\BA) (2+\SNR^{-1})\SNR^{-1}  \\
\text{Use}~\eqref{eq:YX} \qquad & \leq \frac{\|\BA\|^2\|\BX - \BO\BQ\|_F^2}{2}  + n\|\BA\|^2 \kappa^{-1}\sqrt{d} (2 + \SNR^{-1})\SNR^{-1} \\
\text{Use}~\BX\in{\cal N}_{\eps}\qquad  & \leq n\|\BA\|^2\left(\frac{\eps^2 d}{2}  + \kappa^{-1}\sqrt{d}(2+\SNR^{-1})\SNR^{-1}\right).
\end{align*}
By combining it with~\eqref{eq:PZ}, we have
\begin{align*}
d_F({\cal T}(\BX), \BO) & \leq \frac{4\sqrt{n}\kappa^2\left(\eps^2 d/2  + \kappa^{-1}\sqrt{d}(2+\SNR^{-1})\SNR^{-1}\right) }{  2 - \left( \eps^2d/2 + \kappa (2 + \SNR^{-1})\SNR^{-1}\right) } \leq \eps\sqrt{nd}
\end{align*}
provided that~\eqref{cond:basin} holds.
\end{proof}

\subsection{Proof of Theorem~\ref{thm:main},~\ref{thm:gpm}, and~\ref{thm:bm}}

Step 1-3 are devoted to proving Theorem~\ref{thm:main} and~\ref{thm:gpm}; and Step 4 justifies Theorem~\ref{thm:bm}.

\paragraph{\bf Step 1: Existence of the fixed point and linear convergence.}
Let $\mathcal{N}_{\eps} = \{ \BX\in \Od(d)^{\otimes n}: d_F(\BX,\BO) \leq \eps\sqrt{nd} \}$ and we equip it with the distance function $d_F(\cdot,\cdot)$ in~\eqref{def:df}. Apparently, this is a complete and compact metric space since $\mathcal{N}_{\eps}$ is a bounded and closed subset in $\RR^{nd\times d}$. 
For the operator ${\cal T}$ defined in~\eqref{def:T}, Proposition~\ref{prop:key} ensures that ${\cal T}$ is a contraction mapping on $\mathcal{N}_{\eps}$ and it maps $\mathcal{N}_{\eps}$ to itself under~\eqref{def:casea} and~\eqref{def:caseb}. 
\vskip0.2cm
For the proof of Theorem~\ref{thm:main} and~\ref{thm:gpm}, we use different initializations. 
\begin{itemize}
\item To prove the tightness of the SDR in Theorem~\ref{thm:main}, we simply set $\BO^0 = \BO\in\mathcal{N}_{\eps}$; 
\item To prove the global convergence of the GPM via spectral initialization in Theorem~\ref{thm:gpm}, Proposition~\ref{prop:init} along with Condition $(b)$ in~\eqref{def:caseb} implies that the initial guess $\BO^0$ satisfies $d_F(\BO^0,\BO) \leq \eps\sqrt{nd}$ with $\eps = 7\kappa^2\SNR^{-1}$ and $\SNR\geq 35\kappa^4\sqrt{d}.$
\end{itemize}

Theorem~\ref{thm:banach} guarantees that the generalized power method with an initial value $\BO^0\in \mathcal{N}_{\eps}$ will finally converge to a unique fixed point $\widehat{\BO}$ in $\mathcal{N}_{\eps}$ satisfying
\[
d_F({\cal T}(\widehat{\BO}), \widehat{\BO}) = 0 ~~\Longleftrightarrow~~{\cal T}(\widehat{\BO}) = \PP_n(\BC\widehat{\BO})= \widehat{\BO}\BQ
\]
for some $\BQ\in\Od(d)$.
Moreover, the convergence rate satisfies $d_F(\BO^t, \widehat{\BO}) \leq \rho^t d_F(\BO^0, \widehat{\BO})$
where $\rho<1$ holds under~\eqref{def:casea} and~\eqref{def:caseb} according to Proposition~\ref{prop:key}.

\vskip0.25cm

\paragraph{\bf Step 2: Proof of ${\cal T}(\widehat{\BO})= \widehat{\BO}$.}
Next, we will show that $\BQ$ is in fact equal to $\I_d.$ Note that Lemma~\ref{lem:sigmamin} implies $\sigma_{\min}([\BC\widehat{\BO}]_i) > 0$ for $\widehat{\BO}\in \mathcal{N}_{\eps}$ with $\eps$ satisfying~\eqref{def:casea} or~\eqref{def:caseb}.
For each block of $\BC\widehat{\BO}$, we get
\[
\PP([\BC\widehat{\BO}]_i) = ([\BC\widehat{\BO}]_i[\BC\widehat{\BO}]_i^{\top})^{-1/2}[\BC\widehat{\BO}]_i = \widehat{\BO}_i\BQ,~1\leq i\leq n.
\]
Define
\begin{equation}\label{def:blambda}
\BLambda_{ii} = ([\BC\widehat{\BO}]_i[\BC\widehat{\BO}]_i^{\top})^{1/2},~~\BLambda = \blkdiag(\BLambda_{11},\cdots,\BLambda_{nn})
\end{equation}
which is a strictly positive semidefinite block-diagonal matrix,
and then
\begin{equation}\label{eq:fixCS}
[\BC\widehat{\BO}]_i = \BLambda_{ii} \widehat{\BO}_i\BQ,~1\leq i\leq n ~~\Longleftrightarrow~~ \BC\widehat{\BO} = \BLambda\widehat{\BO}\BQ.
\end{equation}

Now multiplying both sides of~\eqref{eq:fixCS}  by $\widehat{\BO}^{\top}$ gives
\begin{align}
\widehat{\BO}^{\top}\BC\widehat{\BO} = \widehat{\BO}^{\top}\BO\BA\BA^{\top}\BO^{\top} \widehat{\BO}+  \widehat{\BO}^{\top}\bar{\BDelta}\widehat{\BO} = \widehat{\BO}^{\top}\BLambda\widehat{\BO}\BQ,  \label{eq:posQ}
\end{align}
where $\BC$ satisfies~\eqref{def:bardelta}.
The smallest eigenvalue of $\widehat{\BO}^{\top}\BC\widehat{\BO}$ is lower bounded by
\begin{align*}
\lambda_{\min}(\widehat{\BO}^{\top}\BC\widehat{\BO})
& \geq \sigma_{\min}^2(\BA) \cdot \lambda_{\min}( \widehat{\BO}^{\top}\BO\BO^{\top} \widehat{\BO}) - n\|\bar{\BDelta}\| \\
\text{Use}~\eqref{eq:bardelta} ~~\qquad & \geq \sigma_{\min}^2(\BA) \cdot \sigma_{\min}^2 (\BO^{\top}\widehat{\BO}) -3n^{2}\|\BA\|\max\|\BDelta_i\| \\
\text{Use}~\eqref{eq:sigmaZS}~~\qquad & \geq   \sigma_{\min}^2(\BA) \left( 1- \eps^2 d/2\right)^2n^2 - 3n^{2}\|\BA\|\max\|\BDelta_i\| \\
& \geq n^2\sigma^2_{\min}(\BA) \left( (1-\eps^2d/2)^2  - 3\kappa \SNR^{-1}\right) > 0
\end{align*}
under both~\eqref{def:casea} and~\eqref{def:caseb}.
Note $\widehat{\BO}^{\top}\BC\widehat{\BO}$ is positive definite and so is $\widehat{\BO}^{\top}\BLambda\widehat{\BO}$. Therefore, $\BQ$ must be $\I_d$ according to~\eqref{eq:posQ} since the only positive semidefinite orthogonal matrix is $\I_d$. Therefore, $\widehat{\BO}$ is the only fixed point to the nonlinear mapping ${\cal T}$, i.e., ${\cal T}(\widehat{\BO}) = \widehat{\BO}$ and $(\BLambda - \BC)\widehat{\BO} = 0$ where $\BLambda$ is defined in~\eqref{def:blambda}.

\vskip0.25cm

\paragraph{\bf Step 3: Global optimality of $\widehat{\BO} \widehat{\BO}^{\top}$ in~\eqref{def:sdr}.}
Define $\BL = \BLambda - \BC$ where $\BLambda_{ii}$ is defined  in~\eqref{def:blambda} and $\widehat{\BQ} : = \argmin_{\BQ\in\Od(d)}\|\widehat{\BO} - \BO\BQ\|_F.,$ and we have $\BL\widehat{\BO} = 0.$
To prove the global optimality of $\widehat{\BO}\widehat{\BO}^{\top}$, it suffices to show $\lambda_{d+1}(\BL) > 0$, i.e., the $(d+1)$th smallest eigenvalue of $\BL$ is strictly positive since $\BL\widehat{\BO} = 0$ implies $d$ eigenvalues are equal to 0. Besides the $(d+1)$-th smallest eigenvalue, we will also consider the top eigenvalue of $\BL$ as it will be used in analyzing the optimization landscape.
{Note that
\begin{align*}
\BL & =  \left(\I_{nd} -n^{-1} \widehat{\BO}\widehat{\BO}^{\top}\right)\BL \left(\I_{nd} - n^{-1}\widehat{\BO}\widehat{\BO}^{\top}\right) \\
& = \left(\I_{nd} -n^{-1} \widehat{\BO}\widehat{\BO}^{\top}\right)\BLambda \left(\I_{nd} - n^{-1}\widehat{\BO}\widehat{\BO}^{\top}\right) - \left(\I_{nd} -n^{-1} \widehat{\BO}\widehat{\BO}^{\top}\right)\BC \left(\I_{nd} - n^{-1}\widehat{\BO}\widehat{\BO}^{\top}\right) 
\end{align*}
where $\BL\widehat{\BO} = 0.$ As a result, the Weyl's theorem on eigenvalues implies}
the $(d+1)$-th smallest and top eigenvalues of $\BL$ are 
\begin{align*}
\lambda_{\max}(\BL) & \leq \lambda_{\max}(\BLambda) + \|  (\I_{nd} - n^{-1}\widehat{\BO}\widehat{\BO}^{\top} )  \BC  (\I_{nd} - n^{-1}\widehat{\BO}\widehat{\BO}^{\top} ) \|, \\
\lambda_{d+1}(\BL) & \geq \lambda_{\min}(\BLambda) - \|  (\I_{nd} - n^{-1}\widehat{\BO}\widehat{\BO}^{\top} )  \BC  (\I_{nd} - n^{-1}\widehat{\BO}\widehat{\BO}^{\top} ) \|.
\end{align*}
Note that $\BC = \BO\BA\BA^{\top}\BO^{\top} + \bar{\BDelta}$ and
\begin{align*}
& \|  (\I_{nd} - n^{-1}\widehat{\BO}\widehat{\BO}^{\top} )  \BC  (\I_{nd} - n^{-1}\widehat{\BO}\widehat{\BO}^{\top} ) \|
\\
&\qquad \leq \| (\I_{nd} - n^{-1}\widehat{\BO}\widehat{\BO}^{\top})\BO\BA\|_F^2 + \|\bar{\BDelta}\|  \leq \|\BA\|^2 \| (\I_{nd} - n^{-1}\widehat{\BO}\widehat{\BO}^{\top})\BO\|_F^2 + \|\bar{\BDelta}\|  \\
& \qquad \leq \|\BA\|^2 \|\widehat{\BO}-\BO\BQ\|_F^2 + \|\bar{\BDelta}\|  \leq  \eps^2 nd\|\BA\|^2 + n \|\BA\|\sigma_{\min}(\BA)(2+\SNR^{-1})\SNR^{-1}  \\
& \qquad \leq n\|\BA\|^2 \left(\eps^2 d + (2+\SNR^{-1})\SNR^{-1}\right)
\end{align*}
where $\|\bar{\BDelta}\|$ is bounded in~\eqref{eq:bardelta} and
\begin{equation}\label{eq:ohato}
 \| (\I_{nd} - n^{-1}\widehat{\BO}\widehat{\BO}^{\top})\BO\|_F =  \| (\I_{nd} - n^{-1}\widehat{\BO}\widehat{\BO}^{\top})(\BO - \widehat{\BO} \widehat{\BQ}^{\top})\|_F \leq \|\widehat{\BO} - \BO\widehat{\BQ}\|_F \leq \eps\sqrt{nd}.
\end{equation}

Note that $\BLambda_{ii}$ is symmetric and positive semidefinite, and thus its eigenvalues match the singular values of $[\BC\widehat{\BO}]_i$, i.e.,
$[\BC\widehat{\BO}]_i = \BO_i\BA\BA^{\top} \BO^{\top}\widehat{\BO} + \bar{\BDelta}_i^{\top}\widehat{\BO}$
where $\bar{\BDelta}_i$ is the $i$-th block column of $\bar{\BDelta}.$
Thus Lemma~\ref{lem:sigmamin} gives
\begin{align*}
\lambda_{\max}(\BLambda) 
& = \max_{1\leq i\leq n} \sigma_{\max}([\BC\widehat{\BO}]_i) \leq n\|\BA\|^2(1 + \kappa^{-1}(2+\SNR^{-1})\SNR^{-1}), \\
\lambda_{\min}(\BLambda) & = \min_{1\leq i\leq n} \sigma_{\min}([\BC\widehat{\BO}]_i) \geq n\sigma_{\min}^2(\BA)\left( 1 - \eps^2d/2 - \kappa (2+\SNR^{-1})\SNR^{-1}\right).
\end{align*}

Then we have 
\begin{align*}
\lambda_{\max}(\BL) &\leq n\|\BA\|^2 +  n\|\BA\|^2 \left(\eps^2 d + 2(2+\SNR^{-1})\SNR^{-1}\right).
\end{align*}
For $\lambda_{d+1}(\BL)$, it holds that
\begin{align*}
\lambda_{d+1}(\BL) & \geq n\sigma_{\min}^2(\BA)\left( 1 - \eps^2d/2 - \kappa (2+\SNR^{-1})\SNR^{-1}\right) - n\|\BA\|^2 \left(\eps^2 d + (2+\SNR^{-1})\SNR^{-1}\right) \\
& \geq n\sigma_{\min}^2(\BA)\left( 1 - 3\kappa^2\eps^2d/2 - 2\kappa^2 (2+\SNR^{-1})\SNR^{-1}\right) > 0
\end{align*}
under both~\eqref{def:casea} and~\eqref{def:caseb}.
Theorem~\ref{thm:cvx} implies that $\widehat{\BO}\widehat{\BO}^{\top}$ is the unique global minimizer to the SDR, which finishes the proof of Theorem~\ref{thm:main} and~\ref{thm:gpm}.

\paragraph{\bf Step 4: Optimization landscape of the Burer-Monteiro factorization.}

Under condition ($b$) in~\eqref{def:caseb}, we have $\eps^2d \leq 1.4 \sqrt{d}\SNR^{-1}$ for $\SNR\geq 35\kappa^4\sqrt{d}$ and $\eps = 7\kappa^2\SNR^{-1}$. The eigenvalues of $\BL$ satisfy
\begin{align*}
\lambda_{\max}(\BL) & \leq n\|\BA\|^2 \left(1 +\eps^2 d + 2(2+\SNR^{-1})\SNR^{-1}\right) \leq n\|\BA\|^2(1 + 6\sqrt{d}\SNR^{-1}), \\
\lambda_{d+1}(\BL) & \geq n\sigma_{\min}^2(\BA)\left( 1 - 3\kappa^2\eps^2d/2 - 2\kappa^2 (2+\SNR^{-1})\SNR^{-1}\right) \geq n\sigma_{\min}^2(\BA)(1 - 7\kappa^2\sqrt{d}\SNR^{-1}).
\end{align*}
As a result, we have
\[
\frac{\lambda_{\max}(\BL)}{\lambda_{d+1}(\BL)} \leq \kappa^2\cdot \frac{1+ 7\kappa^2\sqrt{d}\SNR^{-1}}{1-7\kappa^2\sqrt{d}\SNR^{-1}}.
\]
Theorem 2.8 in~\cite{L23c} implies that if $p\geq (2\lambda_{\max}(\BL)/\lambda_{d+1}(\BL) - 1)d + 2$, i.e.,
\[
p \geq \left(\frac{2\kappa^2 (1+ 7\kappa^2\sqrt{d}\SNR^{-1})}{1-7\kappa^2\sqrt{d}\SNR^{-1}} -1 \right)d + 2~ \Longleftrightarrow~ \SNR\geq 7\kappa^2\sqrt{d} \cdot \frac{p+2\kappa^2 d + d-2}{p-2\kappa^2 d+d-2},
\]
then the optimization landscape of~\eqref{def:bm} is benign, i.e., every local minimizer is global and the local minimizer $\BS\in\St(d,p)^{\otimes n}$ satisfies $\BS\BS^{\top} = \widehat{\BO}\widehat{\BO}^{\top}.$ This finishes the proof of Theorem~\ref{thm:bm}.

\subsection{Proof of Proposition~\ref{prop:init} (Initialization)}\label{sss:init}

The proof of Proposition~\ref{prop:init} relies on the Davis-Kahan theorem for singular vectors~\cite{DK70,W72}. 
\begin{theorem}[\bf Davis-Kahan theorem]
Let $\BB$ and $\widetilde{\BB}$ be two matrices of size $\RR^{N\times M}$ with $\min\{N,M\}\geq d$.
Suppose 
\[
\sigma_{d}(\BB) \geq \alpha + \delta, \qquad \sigma_{d+1}(\widetilde{\BB})\leq \alpha, 
\]
where $\sigma_d(\cdot)$ denotes the $d$th largest singular value,
then
\[
\frac{1}{2}\min_{\BQ\in\Od(d)}\|\BV_{\widetilde{\BB}} - \BV_{\BB}\BQ\|\leq \left\|  (\I_N - \BV_{\BB}\BV_{\BB}^{\top})\BV_{\widetilde{\BB}}   \right\| \leq \frac{\|\widetilde{\BB} - \BB\|}{\delta}
\]
where $\BV_{\BB}$ and $\BV_{\widetilde{\BB}}\in\RR^{N\times d}$ are the top $d$ normalized left singular vectors of $\BB$ and $\widetilde{\BB}$ respectively. A similar bound also holds for the right singular vectors.
\end{theorem}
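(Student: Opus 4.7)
The proposition has two parts. Fix full SVDs $\BB = \BV_{\BB}\BSigma_1\BW_{\BB}^{\top} + \BV_{\BB,\perp}\BSigma_2\BW_{\BB,\perp}^{\top}$ and $\widetilde{\BB} = \BV_{\widetilde{\BB}}\widetilde{\BSigma}_1\BW_{\widetilde{\BB}}^{\top} + \BV_{\widetilde{\BB},\perp}\widetilde{\BSigma}_2\BW_{\widetilde{\BB},\perp}^{\top}$, in which the subscript $1$ collects the top $d$ singular components and $\perp$ the complementary ones; the hypotheses then read $\sigma_{\min}(\BSigma_1) \geq \alpha + \delta$ and $\|\widetilde{\BSigma}_2\| \leq \alpha$. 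Put $\BE = \widetilde{\BB} - \BB$. The two inequalities will be handled separately.

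For the left inequality (the $\sin\Theta$-to-chordal comparison) I take $\BR \in \Od(d)$ to be the orthogonal polar factor of the $d \times d$ matrix $\BV_{\BB}^{\top}\BV_{\widetilde{\BB}}$. If this matrix has singular values $\cos\theta_1, \dots, \cos\theta_d \in [0,1]$, then by construction $\|\BV_{\BB}^{\top}\BV_{\widetilde{\BB}} - \BR\| = 1 - \cos\theta_{\max}$. Decomposing $\BV_{\widetilde{\BB}} - \BV_{\BB}\BR = \BV_{\BB}(\BV_{\BB}^{\top}\BV_{\widetilde{\BB}} - \BR) + (\I_N - \BV_{\BB}\BV_{\BB}^{\top})\BV_{\widetilde{\BB}}$ and using orthonormality of $\BV_{\BB}$ gives $\|\BV_{\widetilde{\BB}} - \BV_{\BB}\BR\| \leq (1 - \cos\theta_{\max}) + \sin\theta_{\max} \leq 2\sin\theta_{\max} = 2\|(\I_N - \BV_{\BB}\BV_{\BB}^{\top})\BV_{\widetilde{\BB}}\|$, where the elementary inequality $1 - \cos\theta \leq \sin\theta$ on $[0, \pi/2]$ is invoked.

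For the right (Wedin-type) inequality the plan is a coupled-system argument. From the defining identities $\BB\BW_{\BB} = \BV_{\BB}\BSigma_1$ and $\BB^{\top}\BV_{\BB} = \BW_{\BB}\BSigma_1$ substitute $\BB = \widetilde{\BB} - \BE$ and left-multiply by $\BV_{\widetilde{\BB},\perp}^{\top}$ and $\BW_{\widetilde{\BB},\perp}^{\top}$ respectively, using the dual identities $\BV_{\widetilde{\BB},\perp}^{\top}\widetilde{\BB} = \widetilde{\BSigma}_2\BW_{\widetilde{\BB},\perp}^{\top}$ and $\BW_{\widetilde{\BB},\perp}^{\top}\widetilde{\BB}^{\top} = \widetilde{\BSigma}_2\BV_{\widetilde{\BB},\perp}^{\top}$. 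This produces the Sylvester-like pair
\[
\BV_{\widetilde{\BB},\perp}^{\top}\BV_{\BB}\BSigma_1 - \widetilde{\BSigma}_2\BW_{\widetilde{\BB},\perp}^{\top}\BW_{\BB} = -\BV_{\widetilde{\BB},\perp}^{\top}\BE\BW_{\BB},
\]
\[
\BW_{\widetilde{\BB},\perp}^{\top}\BW_{\BB}\BSigma_1 - \widetilde{\BSigma}_2\BV_{\widetilde{\BB},\perp}^{\top}\BV_{\BB} = -\BW_{\widetilde{\BB},\perp}^{\top}\BE^{\top}\BV_{\BB}.
\]
Taking operator norms, invoking $\|X\BSigma_1\| \geq \sigma_{\min}(\BSigma_1)\|X\|$ on the left and $\|\widetilde{\BSigma}_2 Y\| \leq \|\widetilde{\BSigma}_2\|\|Y\|$ on the right (together with $\|\BV_{\widetilde{\BB},\perp}^{\top}\| = \|\BW_{\BB}\| = 1$ needed to absorb the trailing $\BE$-factors), and abbreviating $x = \|\BV_{\widetilde{\BB},\perp}^{\top}\BV_{\BB}\|$, $y = \|\BW_{\widetilde{\BB},\perp}^{\top}\BW_{\BB}\|$, yields the coupled inequalities $(\alpha+\delta)x \leq \alpha y + \|\BE\|$ and $(\alpha+\delta)y \leq \alpha x + \|\BE\|$. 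Setting $z = \max\{x,y\}$ and feeding it into whichever relation carries $z$ on its left collapses the system to $\delta z \leq \|\BE\|$, whence $x \leq \|\BE\|/\delta$. The bound concludes via the elementary identity $\|(\I_N - \BV_{\BB}\BV_{\BB}^{\top})\BV_{\widetilde{\BB}}\| = \|\BV_{\widetilde{\BB},\perp}^{\top}\BV_{\BB}\|$; indeed both sides equal $\sqrt{\lambda_{\max}(\I_d - M^{\top}M)}$ with $M := \BV_{\BB}^{\top}\BV_{\widetilde{\BB}}$, because the $d \times d$ matrices $M^{\top}M$ and $MM^{\top}$ share their spectrum.

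The main obstacle, and the reason the argument cannot close using left singular vectors in isolation, is that the gap hypothesis bounds $\sigma_d(\BB)$ from below and $\sigma_{d+1}(\widetilde{\BB})$ from above on opposite sides of the spectral cut. A single identity drawn only from $\BB\BW_{\BB} = \BV_{\BB}\BSigma_1$ inevitably leaves a residual $\widetilde{\BSigma}_2\BW_{\widetilde{\BB},\perp}^{\top}\BW_{\BB}$ that entangles the left and right singular subspaces of the two matrices. Closing the system by the symmetric right-side identity and then extracting $\delta$ through the maximum-type argument rather than by summing the two inequalities is the essential technical step; a naive summation only yields the weaker bound $2\|\BE\|/\delta$ rather than the sharp $\|\BE\|/\delta$ claimed.
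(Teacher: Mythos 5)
The paper does not supply its own proof of this theorem; it is stated and then cited to Davis--Kahan~\cite{DK70} and Wedin~\cite{W72}. Your proposal is a correct, self-contained proof that reconstructs the standard argument behind Wedin's $\sin\Theta$ theorem and the chordal-distance comparison, so there is nothing in the paper to compare line-by-line against.

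Let me record the checks and the one small point worth tightening.

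For the left inequality: the polar-factor choice $\BR = \PP(\BV_{\BB}^{\top}\BV_{\widetilde{\BB}})$ gives $\|\BV_{\BB}^{\top}\BV_{\widetilde{\BB}} - \BR\| = 1 - \sigma_{\min}(\BV_{\BB}^{\top}\BV_{\widetilde{\BB}}) = 1 - \cos\theta_{\max}$ (valid even if $\BV_{\BB}^{\top}\BV_{\widetilde{\BB}}$ is singular, since all its singular values lie in $[0,1]$), while $\|(\I_N - \BV_{\BB}\BV_{\BB}^{\top})\BV_{\widetilde{\BB}}\| = \sqrt{1 - \sigma_{\min}^2(\BV_{\BB}^{\top}\BV_{\widetilde{\BB}})} = \sin\theta_{\max}$. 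Both terms are controlled by the \emph{same} extreme principal angle, so the triangle-inequality decomposition plus $1 - \cos\theta \leq \sin\theta$ on $[0,\pi/2]$ closes cleanly. Note that $\min_{\BR}\|\BV_{\widetilde{\BB}} - \BV_{\BB}\BR\|$ is certainly bounded by the value at this particular $\BR$, which is all you need.

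For the right inequality: the derivation of the coupled Sylvester pair is correct, and I verified the operator-norm reductions. One minor bookkeeping point: when $N \neq M$, the matrix $\widetilde{\BSigma}_2$ is rectangular, and the second identity should read $\BW_{\widetilde{\BB},\perp}^{\top}\widetilde{\BB}^{\top} = \widetilde{\BSigma}_2^{\top}\BV_{\widetilde{\BB},\perp}^{\top}$ rather than carrying $\widetilde{\BSigma}_2$; since $\|\widetilde{\BSigma}_2\| = \|\widetilde{\BSigma}_2^{\top}\| \leq \alpha$, this does not change the estimate, but you should note it so the dimensions line up. The max-extraction step is the right move: taking $z = \max\{x,y\}$ and plugging it into the inequality in which $z$ appears on the left gives $(\alpha+\delta)z \leq \alpha z + \|\BE\|$, hence $\delta z \leq \|\BE\|$, which is sharper than the $2\|\BE\|/\delta$ one would get by naively adding the two relations --- exactly the observation you flagged. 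Finally, the identity $\|(\I_N - \BV_{\BB}\BV_{\BB}^{\top})\BV_{\widetilde{\BB}}\| = \|\BV_{\widetilde{\BB},\perp}^{\top}\BV_{\BB}\|$ via the shared nonzero spectrum of $M^{\top}M$ and $MM^{\top}$ (for the square $d\times d$ matrix $M = \BV_{\BB}^{\top}\BV_{\widetilde{\BB}}$) is correct and necessary to translate the Wedin bound, stated in terms of $\BV_{\widetilde{\BB},\perp}^{\top}\BV_{\BB}$, into the projector form asserted in the theorem.

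In short: the proof is correct, and it is the standard two-sided Wedin argument. The only improvement I would suggest is to make the rectangular-$\widetilde{\BSigma}_2$ transpose explicit so the Sylvester identities are dimensionally literal.
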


Now we present the proof of Proposition~\ref{prop:init}.
\begin{proof}[\bf Proof of Proposition~\ref{prop:init}]
Consider the data matrix $\BD := \BO\BA + \BDelta\in\RR^{nd\times m}$. 
 Denote $(\BSigma,\BU, \BV)$ as the top $d$ singular values and left/right singular vectors of $\BD$ where $\BSigma\in\RR^{d\times d}$ is a diagonal matrix, $\BU\in\RR^{nd\times d}$ with $\BU^{\top}\BU=\I_d$, and $\BV\in\RR^{m\times d}$ with $\BV^{\top}\BV=\I_d$. It simply holds
\[
 \BU = \BD\BV\BSigma^{-1}.
\]
Let $\BA= \bar{\BU} \bar{\BSigma} \bar{\BV}^{\top}\in\RR^{d\times m}$ be the SVD of $\BA$ and $\bar{\BV}\in\RR^{m\times d}$ consists of top $d$ right singular vectors of $\BA.$ We will approximate $\BU$ by using ``one-step power method":
\[
\BU - \BD \bar{\BV}\BQ \BSigma^{-1} = \BD(\BV - \bar{\BV}\BQ)\BSigma^{-1}
\]
where $\BQ$ is an orthogonal matrix that minimizes $\min_{\BR\in\Od(d)}\|\BV - \bar{\BV}\BR\|_F$. Next, we will show that each block $\BU_i$ of $\BU$ is well approximated by that of $\BD\bar{\BV}\BQ\BSigma^{-1}$. We start with estimating $\|\BV-\bar{\BV}\BQ\|$ using the Davis-Kahan theorem.

Using Weyl's inequality, we have
\begin{align}
\| \BD - \BO\BA\| \leq \|\BDelta\| \quad &  \Longrightarrow \quad |\sigma_i(\BD) - \sqrt{n}\sigma_i(\BA)| \leq \|\BDelta\|,~~1\leq i\leq d \label{eq:WDZ}.
\end{align}
By denoting $\BSigma = \diag(\sigma_1(\BD),\cdots,\sigma_d(\BD))$, it holds that
\[
\sqrt{n}\|\BA\|(1 + \SNR^{-1})\geq \lambda_{\max}(\BSigma)\geq \lambda_{\min}(\BSigma) \geq \sqrt{n}\sigma_{\min}(\BA)(1 - \SNR^{-1}).
\]
and $\|\BDelta_i\| \leq \SNR^{-1} \sigma_{\min}(\BA) \leq \SNR^{-1}\|\BA\|.$

Applying the Davis-Kahan theorem to $\BD$ and $\BO\BA$ implies that
\begin{equation}\label{eq:DKV}
\| \BV - \bar{\BV}\BQ\| \leq \frac{2\|\BDelta\|}{ \sqrt{n}\sigma_{\min}(\BA) - \|\BDelta\| } \leq \frac{2\max\|\BDelta_i\|}{\sigma_{\min}(\BA) - \max \|\BDelta_i\|} = \frac{2}{\SNR - 1}.
\end{equation}
As a result, the $i$th block $\BU_i$ of $\BU$ is approximated by $[\BD \bar{\BV}\BQ \BSigma^{-1}]_i$ with error bounded by
\begin{align*}
\|\BU_i - [\BD \bar{\BV}\BQ \BSigma^{-1}]_i\| 
& \leq \| (\BO_i\BA + \BDelta_i) (\BV - \bar{\BV}\BQ) \BSigma^{-1}\| \\
\text{Use}~\eqref{eq:WDZ} \text{ and}~\eqref{eq:DKV} \qquad 
& \leq \|\BA\|(1 + \SNR^{-1}) \cdot \frac{2}{\SNR-1} \cdot \frac{1}{\sqrt{n}\sigma_{\min}(\BA)(1-\SNR^{-1})} \\
& \leq \frac{2\kappa (\SNR+1)}{\sqrt{n}(\SNR-1)^2}.
\end{align*}
Therefore, 
\begin{align*}
\|\BU_i - \BO_i\BA\bar{\BV} \BQ \BSigma^{-1}\| & \leq \|\BU_i - (\BO_i\BA + \BDelta_i)\bar{\BV} \BQ \BSigma^{-1}\| + \| \BDelta_i\bar{\BV} \BQ \BSigma^{-1}\| \\
& \leq \frac{2\kappa (\SNR+1)}{\sqrt{n}(\SNR-1)^2} + \frac{\max\|\BDelta_i\|}{\sqrt{n}\sigma_{\min}(\BA)(1-\SNR^{-1})} \\
& \leq \frac{2\kappa (\SNR+1)}{\sqrt{n}(\SNR-1)^2} + \frac{\SNR-1}{\sqrt{n}(\SNR-1)^2} \leq \frac{3\kappa (\SNR+1)}{\sqrt{n}(\SNR-1)^2}.
\end{align*}

Now applying Lemma~\ref{lem:conP} gives:
\begin{align*}
\| \PP(\BU_i) - \PP( \BO_i\BA\bar{\BV} \BQ \BSigma^{-1}) \| 
& \leq \frac{2}{\sigma_{\min}({\BO_i\BA\bar{\BV}\BQ\BSigma^{-1}}) }\cdot \|\BU_i - \BO_i \BA\bar{\BV}\BQ\BSigma^{-1}\| \\
& \leq 2\kappa\sqrt{n}(1+\SNR^{-1}) \cdot \frac{3\kappa (\SNR+1)}{\sqrt{n}(\SNR-1)^2} = \frac{6\kappa^2 (\SNR+1)^2}{(\SNR-1)^2}  \SNR^{-1}
\end{align*}
where $\sigma_{\min}(\bar{\BSigma}) = \sigma_{\min}(\BA)$ and
\begin{align*}
\sigma_{\min}(\BO_i\BA\bar{\BV} \BQ \BSigma^{-1}) & = \sigma_{\min}(\bar{\BU}\bar{\BSigma} \BQ \BSigma^{-1})  \geq \sigma_{\min}(\bar{\BSigma})\sigma_{\min}(\BSigma^{-1}) \\
\text{Use}~\eqref{eq:WDZ}\qquad & \geq \frac{\sigma_{\min}(\BA)}{\sqrt{n}\|\BA\|(1+\SNR^{-1})}  = \frac{1}{\kappa\sqrt{n}(1+\SNR^{-1})}.
\end{align*}
Finally we arrive at 
\begin{align*}
d_F(\BO^0,\BO) & \leq \sqrt{ \sum_{i=1}^n \| \PP(\BU_i) - \PP(\BO_i\BA\bar{\BV} \BQ \BSigma^{-1}) \|_F^2  }   \leq \frac{6\kappa^2 (\SNR+1)^2}{(\SNR-1)^2}  \SNR^{-1}  \sqrt{nd}.
\end{align*}

\end{proof}

\subsection{Proof of Corollary~\ref{cor:binary}}\label{ss:corbinary}
Define $\delta : = m^{-1}d\log(n(m+d))$, as we assume $\delta\leq 1/16$ which is guaranteed if $m \geq 16d\log (n(m+d)).$
We start with the estimation of singular values of $\BA$. Note that $\BA\BA^{\top} = \sum_{j=1}^m \ba_j\ba_j^{\top}$ and $\E \BA\BA^{\top} = md^{-1}\I_d$. Then by~\cite[Example 6.21]{W19}, we have
\[
\left\|\frac{d}{m}\BA\BA^{\top} - \I_d\right\| \leq 2\gamma \delta + \sqrt{2\gamma\delta}
\]
with probability $1 - 2d(n(m+d))^{-\gamma}$. Setting $\gamma=2$ gives
\[
\frac{m}{4d}\leq \frac{m(1 - 2\sqrt{\delta} - 4\delta)}{d} \leq \lambda_{\min}(\BA\BA^{\top}) \leq \lambda_{\max}(\BA\BA^{\top}) \leq \frac{m(1 + 2\sqrt{\delta} + 4\delta)}{d} \leq \frac{7m}{4d}.
\]

It is straightforward to control $\|\BDelta_i\|$ via the Bernstein inequality (\cite[Theorem 1.4]{T12}): conditioned on $\BA$, it holds{
\begin{align*}
&   \sum_{j=1}^m\E \BDelta_{ij} \BDelta_{ij}^{\top} = \E \BDelta_i\BDelta_i^{\top}= \sum_{j=1}^m \theta(1-\theta) \ba_j\ba_j^{\top} + (1-\theta) \E \bz_{ij}\bz_{ij}^{\top} = (1-\theta)\left( \theta\BA\BA^{\top} + \frac{m}{d}\I_d\right), \\
 &  \sum_{j=1}^m\E \BDelta_{ij}^{\top} \BDelta_{ij} = \E \BDelta_i^{\top}\BDelta_i= \sum_{j=1}^m (1-\theta^2) \be_j\be_j^{\top} = (1-\theta^2)\I_m, \\
 & \sum_{i=1}^n \E\BDelta_i^{\top}\BDelta_i = \E \BDelta^{\top}\BDelta=  n(1-\theta^2)\I_m,
\end{align*}}
and then with $\theta\leq 1$, we have
\[
\max\left\{ \|\E \sum_{j=1}^m\E \BDelta_{ij} \BDelta_{ij}^{\top} \|, \| \E  \sum_{j=1}^m\E \BDelta_{ij}^{\top} \BDelta_{ij} \| \right\} \leq (1-\theta)\max\left\{ \left(\theta\|\BA\|^2 + \frac{m}{d}\right), 1+\theta\right\} \leq \frac{3(1-\theta)m}{d}.
\]
Note that $\|\BDelta_{ij}\| \leq 1 +\theta\leq 2$ 
and thus it holds
\begin{align*}
\max\|\BDelta_i\| %& \leq \frac{2\gamma R\log (n(m+d))}{3} + \sqrt{2\gamma\sigma^2 \log (n(m+d))} \\
& \leq \frac{4\gamma\log (n(m+d))}{3} + \sqrt{6\gamma (1-\theta)md^{-1}  \log (n(m+d))} \\
& \leq \left( \frac{8}{3} + 2\sqrt{3(1-\theta)\delta^{-1}} \right)  \log (n(m+d))
\end{align*}
with probability at least $1 - 2(n(m+d))^{-1}.$
Then the sufficient condition in Theorem~\ref{thm:gpm} becomes
\[
\frac{\theta}{2}\sqrt{\frac{m}{d}}\geq 70\kappa^4\sqrt{d}\max\left\{ \frac{8}{3},2 \sqrt{ 3(1-\theta)\delta^{-1}} \right\}\log (n(m+d))
\]
which is guaranteed by
\[
\theta \geq \frac{400\kappa^4 d\log (n(m+d))}{\sqrt{m}}~~~~\text{and}~~~~\theta \geq 1 - \frac{1}{(280\kappa^4\sqrt{3d\log (n(m+d))})^2}
\]
where $d\log (n(m+d)) = m\delta.$   
{

%Then the variance equals{\small
%\begin{align*}
%\E \BDelta_i^{\top}\BDelta_i & =\E [(\diag(\bvarphi_i) - \theta\I_m)\BA^{\top}\BA (\diag(\bvarphi_i) - \theta\I_m)]^2 + \E[(\I_m - \diag(\bvarphi_i)) \BZ_i^{\top}\BZ_i (\I_m - \diag(\bvarphi_i)) ]^2 \\
%& ~-\theta \E(\diag(\bvarphi_i) - \theta\I_m)\BA^{\top}\BA (\I_m - \diag(\bvarphi_i)) -\theta\E(\I_m - \diag(\bvarphi_i)) \BA^{\top}\BA (\diag(\bvarphi_i) - \theta\I_m) 
%\end{align*}}
%where $(\diag(\bvarphi_i) - \theta\I_m) (\I_m - \diag(\bvarphi_i)) = -\theta(\I_m - \diag(\bvarphi_i)).$

%{\small
%\begin{align*}
%& \E (\BDelta_i^{\top}\BDelta_i - (1-\theta^2)\I_m)^2 \\
%& = \E\Big[  (\diag(\bvarphi_i) - \theta\I_m)(\BA^{\top}\BA-\I_m) (\diag(\bvarphi_i) - \theta\I_m) + (\I_m - \diag(\bvarphi_i)) (\BZ_i^{\top}\BZ_i -\I_m )(\I_m - \diag(\bvarphi_i)) \\
%& ~ -2\theta(\diag(\bvarphi_i) - \theta\I_m) + (\diag(\bvarphi_i) - \theta\I_m)\BA^{\top}\BZ_i (\I_m -\diag(\bvarphi_i)) + (\I_m -\diag(\bvarphi_i)) \BZ_i^{\top}\BA(\diag(\bvarphi_i) - \theta\I_m) \Big]^2
%\end{align*}
%where
%\[
%(\diag(\bvarphi_i) - \theta\I_m)^2+ (\I_m - \diag(\bvarphi_i))^2 =  (1+\theta^2)\I_m -  2\theta\diag(\bvarphi_i).
%\]
%}

}
%\begin{theorem}{\bf \cite[Theorem 4.4.5]{V18}}\label{thm:gauss}
%For any $\BX\in\RR^{n\times m}$ random matrix whose entries are i.i.d. standard normal random variables. For any $t > 0$, it holds
%\[\|\BX\| \lesssim \sqrt{n} + \sqrt{m} + t\]
%with probability at least $1 - 2\exp(-t^2).$
%\end{theorem}

%\bibliography{ProcrustesSDR.bib}
%\bibliographystyle{abbrv}

\end{document}